\documentclass[11pt,a4paper]{article}
\usepackage[margin=1in]{geometry}
\usepackage[utf8]{inputenc}
\usepackage{soul}
\usepackage{url}
\usepackage[colorlinks=true]{hyperref}
\usepackage[small]{caption}
\usepackage{graphicx}
\usepackage{amsmath,amssymb,amsthm}
\usepackage{mathtools}
\usepackage{multirow}
\usepackage{xspace}
\usepackage{bm}
\usepackage{enumerate}
\usepackage{enumitem}
\usepackage[boxed,linesnumbered]{algorithm2e}	

\allowdisplaybreaks
\newtheorem{proposition}{Proposition}

\newtheorem{definition}{Definition}[section]

\newtheorem{lemma}[definition]{Lemma}

\newtheorem{theorem}{Theorem}

\newcommand{\bigO}{\mathcal{O}}
\newcommand{\Prob}[2]{\mathbf{P}_{#1} \left( #2 \right)}
\newcommand{\Expec}[2]{\mathbf{E}_{#1} \left[ #2 \right]}

\newcommand{\tcons}{\ensuremath{T^{\mathrm{cons}}}}
\newcommand{\thit}{\ensuremath{T^{\mathrm{hit}}}}

\newcommand{\model}{\ensuremath{\mathcal{M}}}

\newcommand{\nmut}{\ensuremath{z}} % Number of non-mutants
 % Fitness
 % Mutation probability
 % Population of a given type
 % Cost/Utility
 % Sampling probability in selection
 % Strategy in current round
\newcommand{\bw}{\mathbf{w}} 
\newcommand{\bx}{\mathbf{x}} 
\newcommand{\bz}{\mathbf{z}}

\newcommand{\bp}{\mathbf{p}} 
\newcommand{\bone}{\mathbf{1}}
 % Edge weight
 % Strategy set
 % Number of mistakes 
 % Number of disagreements 
 % Vertex degree
 % cut weight at step t
 % cut weight at step t+1
 % Potential

\renewcommand{\leq}{\leqslant}
\renewcommand{\le}{\leqslant}

\renewcommand{\ge}{\geqslant}

\title{\textbf{On a Voter Model with Context-Dependent Opinion Adoption}}

\author{
Luca Becchetti\\
		{\small{}Sapienza University of Rome}\\
		{\small{}Rome, Italy}\\
		{\small{}\texttt{becchetti@diag.uniroma1.it}}\\
\and Vincenzo Bonifaci\\
		{\small{}Roma Tre University}\\
		{\small{}Rome, Italy}\\
		{\small{}\texttt{vincenzo.bonifaci@uniroma3.it}}\\
\and Emilio Cruciani \\
		{\small{}Paris-Lodron University of Salzburg}\\
		{\small{}Salzburg, Austria}\\
		{\small{}\texttt{emilio.cruciani@plus.ac.at}}\\
\and Francesco Pasquale\\ 
		{\small{}University of Rome ``Tor Vergata''}\\
		{\small{}Rome, Italy}\\
		{\small{}\texttt{pasquale@mat.uniroma2.it}} 
}

\begin{document}

\maketitle 

\begin{abstract}
%!TeX root=../main-ijcai.tex
Opinion diffusion is a crucial phenomenon in social networks, often underlying
the way in which a collective of agents develops a consensus on relevant
decisions.  The voter model is a well-known theoretical model to study opinion
spreading in social networks and structured populations. Its simplest version
assumes that an updating agent will adopt the opinion of a neighboring agent
chosen at random. The model allows us to study, for example, the probability
that a certain opinion will fixate into a consensus opinion, as well as the
expected time it takes for a consensus opinion to emerge. 

Standard voter models are oblivious to the opinions held by the agents involved
in the opinion adoption process. We propose and study a context-dependent
opinion spreading process on an arbitrary social graph, in which the
probability that an agent abandons opinion $a$ in favor of opinion $b$ depends
on both $a$ and $b$. We discuss the relations of the model with existing voter
models and then derive theoretical results for both the fixation probability
and the expected consensus time for two opinions, for both the synchronous and
the asynchronous update models.

\end{abstract}

\section{Introduction}
The \emph{voter model} is a well-studied stochastic process defined on a graph
to model the spread of opinions (or genetic mutations, beliefs, practices,
etc.) in a population~\cite{Liggett:1985,hassin01voter}. In a voter model, each
node maintains a state, and when a node requires updating it will import its
state from a randomly chosen neighbor. Updates can be asynchronous, with one
node activating per step~\cite{Liggett:1985}, or synchronous, with all nodes
activating in parallel~\cite{hassin01voter}.  While the voter model on a graph
has been introduced in the 1970s to model opinion dynamics, the case of a
complete graph is also very well-known in population genetics where, in fact,
it was introduced even earlier, to study the spread of mutations in a
population~\cite{ewens2012mathematical, nowak2006evolutionary}. 

Mathematically, among the main quantities of interest in the study of voter
models, there are the \emph{fixation probability} of an opinion---the
probability of reaching a configuration in which each node adopts such
opinion---and the expected \emph{consensus} (or \emph{absorption})
\emph{time}---the expected number of steps before all nodes agree on an
opinion. Such quantities could in principle be computed for any $n$-node graph
by defining a Markov chain on a set of $C^n$ configurations, where $C$ is the
number of opinions, but such an approach is computationally infeasible even for
moderate values of $n$.  Therefore, a theoretical analysis of a voter process
will often focus on obtaining upper and lower bounds for these quantities,
still drawing heavily on the theory of Markov
chains~\cite{aldous2002reversible, Levin:2009}, but with somewhat different
approaches and tools for the synchronous and asynchronous cases. 

A limitation of the standard voter process is that the dynamics is oblivious to
the states of both the agent $u$ that is updating and of the neighbor that $u$
copies its state from, and the copying always occurs. One could easily imagine
a situation (for example, in politics) where an agent holding opinion $a$ is
more willing to adopt the opinion $b$ of a neighbor rather than to adopt
opinion $c$; in general, the probability of abandoning opinion $a$ in favor of
opinion $b$ might depend on both $a$ and $b$. This motivates the study of
\emph{biased} voter models~\cite{Berenbrink:2016, sood2008voter} and in
particular motivates us to introduce a voter model with an opinion adoption
probability that depends on the context, that is, on the opinions of
\emph{both} agents involved in an opinion spreading step. 

We define and study extensions of the voter model that allow the opinion
adoption probability to depend on the pair of opinions involved in an update
step. We consider both an asynchronous variant and a synchronous variant of a
context-dependent voter model with two opinions, 0 and 1. We assume that an
agent holding opinion $c \in \{0,1\}$ is willing to copy the opinion of an
agent holding opinion $c' \in \{0,1\}$ with some probability $\alpha_{c,c'}$,
which models the bias in the update.  We study both the fixation probabilities
and the expected consensus time. 

\subsection{Our Findings}
In general, a seemingly minor feature as the form of bias we consider has a
profound impact on the analytical tractability of the resulting model.  While
the unbiased case\footnote{The model is unbiased when
$\alpha_{0,1}=\alpha_{1,0}$.} can still be connected to a variant of the voter
model and analyzed accordingly with some extra work, the same is not possible
for the biased case. Specifically, in Section~\ref{se:unbiased} we prove that a
lazy variant of the voter model is equivalent (i.e., it produces the same
distribution over possible system's configurations) to the unbiased variant of
the model we consider. The proof, given in Lemma~\ref{prop:equiv-models} for
the synchronous case\footnote{The proof for the asynchronous case is
essentially the same up to technical details and is omitted for the sake of
space.}, uses a coupling between the Markov chains that describe the two
models. For the asynchronous case, this allows us to directly leverage known
connections between the asynchronous voter model and random walks
(Proposition~\ref{prop:async-unbiased}). For the case of the clique, this
general result can be improved, providing explicit, tight bounds on expected
consensus time (Theorem~\ref{thm:exact-clique}).  In the synchronous case, the
above connection is not immediate (i.e., Proposition~\ref{prop:async-unbiased}
does not apply) and analyzing expected time to consensus requires adapting
arguments that have been used for continuos Markov chains to the synchronous,
discrete setting (Theorem~\ref{cl:consensus_1}).

The biased case is considerably harder to analyze, the main reason being that
it is no longer possible to collapse the Markov chain describing the system
(whose state space is in general the exponentially large set of all possible
configurations) to a ``simpler'' chain, e.g., a random walk on the underlying
network, not even in the case of the clique.  Despite these challenges, some
trends emerge from specific cases.  Interestingly, it is possible to derive the
exact fixation probabilities for the class of regular networks, highlighting a
non-linear dependence from the bias (Theorem~\ref{thm:async-biased-clique-fp}),
while an asymptotically tight analysis for the clique
(Theorem~\ref{thm:async-biased-clique-ect}) suggests that the presence of a
bias may have a positive impact on achieving faster consensus in dense
networks. Though seemingly intuitive, this last aspect is not a shared property
of biased opinion models in general~\cite{montanari2010spread,
anagnostopoulos22biased}. The behavior of the model is considerably more
complex and technically challenging in the synchronous, biased case. In
particular, the preliminary results we obtain highlight a general dependence of
fixation probabilities (Proposition~\ref{prop:fixation-clique-sync}) and,
notably, expected consensus time (Theorem~\ref{th:biased-sync-time}) from both
the bias and the initial configuration. 

\subsection{Related work}
For the sake of space, we mostly discuss results that are most closely related
to the setting we consider.

\paragraph{Voter and voter-like models.} Due to its versatility, the voter
model has been defined multiple times across different disciplines and has a
vast literature. As mentioned in the introduction, the special case of the
voter model on a complete graph was first introduced in mathematical genetics,
being closely related to the so-called Wright-Fisher and Moran
processes~\cite{Moran:1958, kimura1962prob, ewens2012mathematical}. The first
asynchronous formulation of a voter model on a connected graph has been
proposed in the probability and statistics community in the
1970s~\cite{Liggett:1985, donnelly1983finite}, while Hassin and
Peleg~\cite{hassin01voter} were the first to study this model in a synchronous
setting. A classic result of these papers is that the fixation probability of
an opinion $c$ is equal to the weighted fraction of nodes holding opinion $c$,
where the weight of a node is given by its degree~\cite{hassin01voter,
sood2008voter}.  The expected consensus time of the voter model is much more
challenging to derive exactly, even for highly structured graphs. In the
asynchronous case, it has often been studied by approximating the process with
a continuous diffusion partial differential
equation~\cite{ewens2012mathematical, sood2008voter, baxter2008fixation}. For
two opinions on the complete graph, this yields the approximation 
\begin{equation}
\label{eq:basicvoter-async-clique}
T(n) \approx n^2 h(k/n)
\end{equation}
where $T(n)$ is the expected consensus time on the $n$-clique, $k$ is the
number of nodes initially holding the first opinion and $h(p) = -p \ln p -(1-p)
\ln (1-p)$~\cite{sood2008voter, ewens2012mathematical}.  To the best of our
knowledge, however, no error bound was known for such diffusion
approximations\footnote{For the $n$-clique, we show
that~\eqref{eq:basicvoter-async-clique} is correct within an additive
$\bigO(n)$ term. See Theorem~\ref{thm:exact-clique}
(Section~\ref{sec:async-consensus}).}. Another approach is to use the duality
between voter model and coalescing random walks~\cite{Liggett:1985,
donnelly1983finite}, which involves no approximations, but the resulting
formulas are hard to interpret, and to paraphrase Donnelly and
Welsh~\cite{donnelly1983finite}, ``an exact evaluation of the expected
absorption time for a general regular graph is a horrendous computation''.  As
for approximations, expected consensus time for the voter model can be bounded
by $\bigO((d_{\textrm{avg}}/d_{\textrm{min}})(n/\Phi))$, where
$d_{\textrm{avg}}$ and $d_{\textrm{min}}$ are, respectively, the average and
minimum degrees~\cite{Berenbrink:2016}.  Most relevant to our discussion is the
biased voter model considered by Berenbrink et al.~\cite{Berenbrink:2016}, in
which the probability of adoption of an alternative opinion $c'$ depends on
$c'$ (and only on $c'$). While our model is different if we consider more than
$2$ opinions, there are several other differences with respect
to~\cite{Berenbrink:2016} even in the binary case. In particular, Berenbrink et
al.~only consider the synchronous setting, they assume there is a ``preferred''
opinion that is never rejected and that there is a constant gap between the
adoption probabilities of the preferred opinion and of the non-preferred one.
Finally, they only consider the case where the number $k$ of nodes initially
holding the preferred opinion is at least $\Omega(\log n)$.  Thus, for example,
their results do not apply in the neutral case ($\alpha_{01}=\alpha_{10}$) or
when $k$ is, say, $\sqrt{\log n}$. Our results for the biased, synchronous
case (Theorem~\ref{th:biased-sync-time}) are complementary to those
in~\cite{Berenbrink:2016}.  While their results are stronger when the above
assumptions hold, ours address the general and challenging case of an arbitrary
initial configuration. 

\paragraph{Pull vs push.} We only reviewed here models where nodes ``pull'' the
opinion from their neighbor, since both the standard voter model and our
generalization follow this rule, but we remark that ``push'' models, also known
as invasion processes, have also been defined and studied on connected
graphs~\cite{Lieberman:2005, diaz2016absorption}. The asynchronous push model
is sometimes called the (generalized) Moran process~\cite{diaz2016absorption,
nowak2006evolutionary}. We remark that while the pull and push models are
interchangeable on regular graphs, on irregular graphs their behavior can be
markedly different. 

\paragraph{Other biased opinion dynamics.} We are aware of only a few
analytically rigorous studies of biased opinion dynamics, including biased
variants of the voter model~\cite{sood2008voter, Berenbrink:2016,
anagnostopoulos22biased, CrucianiMQR21, durocher2022invasion}, sometimes framed
within an evolutionary game setting~\cite{montanari2010spread}.  In general,
these contributions address different models, be it because of the way in which
bias in incorporated within the voting rule, the opinion dynamics itself or the
temporal evolution of the process (e.g., synchronous vs asynchronous). We
remark that all these aspects can deeply affect the overall behavior of the
resulting dynamics. Specifically, as observed in a number of more or less
recent contributions~\cite{anagnostopoulos22biased, CrucianiMQR21,
cooper2018discordant, hindersin2014counterintuitive}, even minor changes in the
model that would intuitively produce consistent results with a given baseline
can actually induce fundamental differences in the overall behavior, so that it
is in general hard to predict if and when results for one model more or less
straightforwardly carry over to another model, even qualitatively.  Less
related to the spirit of this work, a large body of research addresses biased
opinion dynamics using different approaches, based on approximations and/or
numerical simulations.  Examples include numerical simulations for large and
more complex scenarios~\cite{hindersin2014counterintuitive}, mean-field or
higher-order~\cite{peralta2021effect} and/or continuous
approximations~\cite{assaf2012metastability}.  While these approaches can
afford investigation of richer and more complex evolutionary game settings
(e.g.,~\cite{peralta2021effect}), they typically require strong simplifying
assumptions to ensure tractability, so that it is harder (if not impossible) to
derive rigorous results.

\section{Model formulation}\label{se:vincenzo_model}

\paragraph{Notation.} For a natural number $k$, let $[k] :=
\{0,1,2,\ldots,k-1\}$. If $G = (V, E)$ is a graph, we write $N_G(u)$ (or simply
$N(u)$ if $G$ can be inferred from the context) for the set of neighbors of
node $u$ in $G$. We write $d_u$ for the degree of node $u$. 

\paragraph{Model.} We define an opinion dynamics model on networks. The
parameters of the model are: i) an underlying \emph{topology}, given by a graph
$G$ on $n$ nodes, with symmetric adjacency matrix $A=(a_{uv})_{u, v \in [n]}$;
ii) a number of \emph{opinions} (or \emph{colors}) $C \ge 2$; iii) an
\emph{opinion acceptance matrix} $(\alpha_{c,c'})_{c,c' \in [C]}$. The initial
opinion of each agent (node) $u$ is encoded by some $x_u^{(0)} \in [C]$. 

\SetAlgoNoEnd
\begin{algorithm}[t]
\caption{\FuncSty{Update}($u$)\label{algo:update}}
\DontPrintSemicolon
			\KwSty{Sample} $v \in N(u)$ \;
			$c \leftarrow x_u, c' \leftarrow x_v$\;
			\KwSty{Sample} $\theta \in [0,1]$ \;
			\If{$\theta < \alpha_{c,c'}$}{
				$x_u \leftarrow x_v$\;
				\Return{\FuncSty{accept}}\;
			}
			\Return{\FuncSty{reject}}\;
\end{algorithm}

For any node $u \in [n]$, we define an update process $\FuncSty{Update}(u)$
consisting of the following steps (summarized in Algorithm~\ref{algo:update}):

\begin{enumerate}[noitemsep,topsep=0pt]

\item \textbf{Sample}: Sample a neighbor $v$ of $u$ uniformly at random, i.e.,
according to the distribution $(a_{u1}/d_u,\ldots,a_{un}/d_u)$ where $a_{uv}=1$
if $u$ and $v$ are adjacent, $a_{uv}=0$ otherwise. Here $d_u=|N(u)|=\sum_{v \in
[n]} a_{uv}$ is the degree of node $u$. 

\item \textbf{Compare}: Compare $u$'s opinion $c = x_u$ with $v$'s opinion $c'
= x_v$. 

\item \textbf{Accept/reject}: With probability $\alpha_{c,c'}$, set $x_u
\leftarrow x_v$;  in this case we say $u$ \emph{accepts} $v$'s opinion.
Otherwise, we say $u$ \emph{rejects} $v$'s opinion.

\end{enumerate}

We consider two variants of the model, differing in how the updates are
scheduled.  In one iteration of the \emph{asynchronous} variant, $u \in [n]$ is
sampled at random and $\FuncSty{Update}(u)$ is applied.  In one iteration of
the \emph{synchronous} variant, each node $u \in [n]$ applies
$\FuncSty{Update}(u)$ in parallel.  We denote by $x_u^{(t)}$ the random
variable encoding the opinion of node $u$ after $t$ iterations of either the
synchronous or the asynchronous dynamics (depending on the context). 

The acceptance probabilities $\alpha_{c,c'}$ are parameters of the model. We
note that the parameters $\alpha_{c,c'}$ with $c=c'$ are irrelevant for the
dynamics, since a node sampling a neighbor of identical opinion will not change
opinion, irrespective of whether it accepts the neighbor's opinion or not.
Hence, to specify the opinion acceptance matrix $C(C-1)$ parameters are
sufficient; we can assume that the diagonal entries equal, say, 1. In
particular, when $C=2$ it is enough to specify $\alpha_{01}$ and $\alpha_{10}$.
When $\alpha_{01}=\alpha_{10}=1$, the model boils down to the standard voter
model~\cite{hassin01voter, Liggett:1985}. 

In the rest of this work we assume $C=2$. In this case, we say that the model
is \emph{unbiased} if the opinion acceptance matrix is symmetric, i.e.,
$\alpha_{01} = \alpha_{10}$, and \emph{biased} otherwise. 

\paragraph{Quantities of interest.} The \emph{fixation probability} of opinion
1 is the probability that there exists an iteration $t$ such that $x_u^{(t)}=1$
for all $u \in [n]$.  The \emph{consensus time} is the index of the first
iteration $t$ such that $x_u^{(t)}=x_v^{(t)}$ for all $u, v \in [n]$.

\section{The unbiased setting}\label{se:unbiased}
Before embarking on the biased case, which is substantially more complex, in
this section we review or prove directly results for the unbiased setting
($\alpha_{01}=\alpha_{10}$). We consider the asynchronous and the synchronous
variants separately. In all formulas of this section,
$\alpha=\alpha_{01}=\alpha_{10}$. 

\subsection{Asynchronous variant}\label{subse:unbiased_asynch}
The main, intuitive observation about the unbiased asynchronous variant of our
model is that the model can equivalently be described by a suitable, ``lazy''
voter model, where each iteration is either an idle iteration (with probability
$1-\alpha$) or an iteration of the standard asynchronous voter model (with
probability $\alpha$). 

This in turn implies that, for the fixation probability one can simply
disregard the idle iterations and therefore obtain the same fixation
probability as for the standard asynchronous voter model. In an arbitrary
topology, this was derived by Sood et al.~\cite{sood2008voter}: if we call
$\phi^{\mathrm{avoter}}$ the fixation probability of the asynchronous voter
model, then
\begin{equation}\label{eq:sood}
\phi^{\mathrm{avoter}} = \frac{\sum_{u \in [n]} d_u x^{(0)}_u}{\sum_{u\in [n]} d_u}, 
\end{equation}
where $x_u^{(0)}$ and $d_u$ are respectively the initial opinion and the degree
of node $u$.  Since $x_u^{(0)} \in \{0,1\}$, the fixation probability
$\phi^{\mathrm{avoter}}$ is proportional to the volume of nodes initially
holding opinion $1$
\footnote{We note incidentally that the fixation
probability can also be computed by suitably relating the asynchronous model to
the transition matrix of the lazy random walk we discuss in
Section~\ref{subse:unbiased_synch}. This connection is only mentioned here and
made rigorous in Appendix~\ref{se:async_lazy}}.

In the analysis of the expected consensus time, instead, one cannot ignore the
idle iterations, but since they occur with probability $1-\alpha$ independently
of other random choices, their effect is simply that of slowing down the
standard asynchronous voter process by a factor $1/\alpha$. This intuitive
argument can be formalized through a standard Markov chain coupling argument
(see Appendix~\ref{app:lazy-chains}). 

\begin{proposition}\label{prop:async-unbiased}
In the unbiased asynchronous case, the fixation probability is the same as for
the standard asynchronous voter model. The expected consensus time is
$T^{\mathrm{avoter}}/\alpha$, where $T^{\mathrm{avoter}}$ is the expected
consensus time of the standard asynchronous voter model. 
\end{proposition}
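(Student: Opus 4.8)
The plan is to reduce the unbiased asynchronous dynamics to a time-changed (``lazy'') version of the standard asynchronous voter model, and then to import both quantities from what is already known about that model. The structural fact I would use --- the asynchronous counterpart of Lemma~\ref{prop:equiv-models} --- is that a single iteration of the unbiased asynchronous model picks $u\in[n]$ uniformly, picks $v\in N(u)$ uniformly, and then, with probability $\alpha$, sets $x_u\leftarrow x_v$, while with probability $1-\alpha$ it leaves the configuration unchanged; crucially, the choice of $(u,v)$ is made independently of the ``accept'' coin, so conditioned on acceptance the transition is exactly one step of the standard asynchronous voter chain. Assuming $\alpha\in(0,1]$, I would therefore realize the process on a common probability space as follows: run the standard asynchronous voter chain $(Z_k)_{k\ge 0}$ started at $x^{(0)}$, independently draw i.i.d.\ Bernoulli$(\alpha)$ coins, and let the ``lazy'' chain repeat its current state on a $0$ and move from $Z_k$ to $Z_{k+1}$ on a $1$.

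For the fixation probability, the point is that idle iterations never alter the configuration, so the set of configurations ever visited by the lazy chain is precisely $\{Z_0,Z_1,\dots\}$. In particular the all-$1$ configuration (which is absorbing for the voter model) is reached by the lazy chain if and only if it is reached by $(Z_k)$, hence with the same probability; by \eqref{eq:sood} this common value is $\phi^{\mathrm{avoter}}=\bigl(\sum_{u} d_u x^{(0)}_u\bigr)/\bigl(\sum_{u} d_u\bigr)$.

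For the expected consensus time, I would set $N:=\min\{k:\ Z_k\text{ is a consensus configuration}\}$, so that $\Expec{}{N}=T^{\mathrm{avoter}}$, finite on any connected graph. Under the coupling the lazy chain is at consensus exactly when the underlying voter chain is, so the consensus time $T$ of the lazy chain equals the iteration index of its $N$-th non-idle step. Writing $G_i$ for the number of iterations from just after the $(i-1)$-th non-idle step up to and including the $i$-th, the $G_i$ are i.i.d.\ geometric with parameter $\alpha$ (so $\Expec{}{G_i}=1/\alpha$) and are independent of $N$, since $N$ is a function of the voter trajectory $(Z_k)$ alone whereas the $G_i$ are functions of the Bernoulli coins alone. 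Hence $T=\sum_{i=1}^N G_i$, and conditioning on $N$ --- equivalently, applying Wald's identity, which is valid because $\Expec{}{N}<\infty$ and $\Expec{}{G_1}<\infty$ --- yields $\Expec{}{T}=\Expec{}{N}\,\Expec{}{G_1}=T^{\mathrm{avoter}}/\alpha$.

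The step I expect to require the most care is the first one: verifying precisely that conditioning a single unbiased-asynchronous iteration on ``the configuration changed'' reproduces the voter-model transition kernel, and that this is consistent across successive iterations, so that in the coupling the geometric gaps $G_i$ are genuinely independent of $(Z_k)$. This is routine Markov-chain coupling bookkeeping --- essentially what the asynchronous analogue of Lemma~\ref{prop:equiv-models} and Appendix~\ref{app:lazy-chains} carry out --- but it is the spot where a careless argument could slip; the degenerate case $\alpha=0$, in which the dynamics is frozen and consensus is never reached unless the start is already a consensus, is excluded.
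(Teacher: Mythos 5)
Your proposal is correct and follows essentially the same route as the paper's proof in Appendix~\ref{app:lazy-chains}: realize the unbiased asynchronous dynamics as the lazy version of the standard asynchronous voter chain via independent Bernoulli($\alpha$) coins, deduce equality of fixation probabilities from the fact that the coupled chains visit the same configurations, and get the $1/\alpha$ factor from the geometric inter-update gaps (your Wald-identity step plays the role of the paper's equation~\eqref{eq:yconsensus}). One small wording caution: the correct conditioning is on the accept coin coming up heads --- which is independent of $(u,v)$ and of the current state once the irrelevant diagonal entries are set to $\alpha$ --- not on ``the configuration changed,'' which would not reproduce the voter kernel; the coupling you actually construct in the body already does the right thing.
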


\paragraph{On the $n$-clique topology.}\label{sec:async-consensus}
As discussed in the introduction, a very natural question is: how large is the
expected consensus time on the $n$-clique as a function of $n$? Despite this
question having been studied multiple times before, in the literature there are
either diffusion approximations with unknown
error~\cite{ewens2012mathematical}, or exact formulas involving multiple
partial summations that are hard to interpret asymptotically~\cite{Glaz:1979}.
By further analyzing a result of Glaz~\cite{Glaz:1979}, we derive here an
explicit formula with bounded error that is easy to interpret, which in fact
agrees with the diffusion approximation up to lower order terms, thus also
showing that at least in this case, the diffusion
approximation~\eqref{eq:basicvoter-async-clique} yields a correct estimate. 

In the unbiased asynchronous model, the expected consensus time on the clique
is the same as the mean absorption time of the underlying birth-death process,
the state of which is summarized by the number of nodes holding opinion 1.
Call $T_k(n)$ the expected consensus time when starting from a configuration
with $k$ nodes holding opinion $1$, and the remaining $n-k$ holding opinion
$0$. We prove that $T_k(n) = \bigO(n^2/\alpha)$, more precisely:

\begin{theorem}
\label{thm:exact-clique}
If $\alpha_{01}=\alpha_{10}=\alpha$ (for some $\alpha>0$), 
then for each $k=1,\ldots,n-1$, 
\[
T_k(n) = \frac{1}{\alpha} n^2 h(k/n) + \bigO(n/\alpha),
\] 
where $h(p) := -p \ln p - (1-p) \ln (1-p)\le\ln 2$. 
\end{theorem}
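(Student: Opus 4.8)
The plan is to collapse everything to a one–dimensional birth–death chain and then solve the associated discrete Poisson equation asymptotically. By Proposition~\ref{prop:async-unbiased} it suffices to establish the bound for $\alpha=1$, i.e.\ for the standard asynchronous voter model on the clique, and then divide the resulting expression by $\alpha$. On $K_n$, by symmetry the consensus time of a configuration depends only on the number $k$ of nodes holding opinion $1$, so $T_k(n)$ equals the mean absorption time at $\{0,n\}$ of the birth–death chain on $\{0,1,\dots,n\}$ which, from state $k$, moves to $k+1$ or to $k-1$ each with probability $p_k=q_k=\frac{k(n-k)}{n(n-1)}$ (the probability that the sampled node and its sampled neighbour hold opposite opinions), and stays put otherwise. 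In particular the chain is a lazy \emph{symmetric} walk, which is what makes the analysis clean.

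First I would write the first–step recursion $(p_k+q_k)T_k = 1 + p_kT_{k+1}+q_kT_{k-1}$ and, using $p_k=q_k$, rewrite it as the discrete Dirichlet problem
\[
T_{k+1}-2T_k+T_{k-1} = -\frac{1}{p_k} = -\frac{n(n-1)}{k(n-k)}, \qquad T_0=T_n=0 .
\]
Its solution is given by the Green's function of the path Laplacian, $G(k,j)=\min(k,j)\,(n-\max(k,j))/n$, so
\[
T_k = (n-1)\sum_{j=1}^{n-1}\frac{\min(k,j)\,(n-\max(k,j))}{j\,(n-j)} .
\]
Splitting the sum at $j=k$ and simplifying the two pieces (for $j\le k$ the summand is $\frac{n-k}{n-j}$, for $j\ge k$ it is $\frac{k}{j}$) collapses this to a pair of partial harmonic sums:
\[
T_k = (n-1)\left[(n-k)\sum_{j=n-k}^{n-1}\frac{1}{j} + k\sum_{j=k+1}^{n-1}\frac{1}{j}\right] .
\]

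Next I would estimate both harmonic sums by comparison with integrals. Sandwiching $\sum_{j=a}^{b}\tfrac1j$ between $\ln\frac{b+1}{a}$ and $\ln\frac{b}{a-1}$ gives, uniformly for $1\le k\le n-1$,
\[
(n-k)\sum_{j=n-k}^{n-1}\frac1j = (n-k)\ln\frac{n}{n-k}+\bigO(1), \qquad
k\sum_{j=k+1}^{n-1}\frac1j = k\ln\frac{n}{k}+\bigO(1),
\]
since the gap between the two integral bounds, multiplied by its prefactor, is at most $\frac{n-k}{n-k-1}\le 2$ (resp.\ $\le 1$). Substituting and using $-p\ln p = p\ln\frac1p$ yields
\[
T_k = (n-1)\Bigl[n\Bigl(\tfrac{k}{n}\ln\tfrac{n}{k}+\tfrac{n-k}{n}\ln\tfrac{n}{n-k}\Bigr)+\bigO(1)\Bigr] = (n-1)\,n\,h(k/n)+\bigO(n) = n^2 h(k/n)+\bigO(n),
\]
where in the last step I used $0\le h(k/n)\le\ln2$, so that both $n\,h(k/n)$ and $(n-1)\cdot\bigO(1)$ are $\bigO(n)$. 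Dividing through by $\alpha$ via Proposition~\ref{prop:async-unbiased} finishes the proof.

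The one delicate point is the uniformity of the $\bigO(1)$ error in the harmonic estimates near the endpoints $k\approx 0$ and $k\approx n$, where a prefactor is large while the corresponding sum is short; the integral sandwich is exactly what keeps this under control, and by the symmetry $k\leftrightarrow n-k$ one really only needs to treat $k\le n/2$, handling the single extreme case $k=1$ by hand. (Alternatively, one could start from the exact partial–summation formula of Glaz~\cite{Glaz:1979} and run the same harmonic-number asymptotics; routing through the Green's function above keeps the argument self-contained.)
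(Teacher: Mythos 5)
Your proof is correct, and it reaches the theorem by a somewhat different route than the paper. Both arguments collapse the process to the same symmetric birth--death chain and both pass through the exact identity $T_k=(n-1)\bigl[(n-k)\sum_{j=n-k}^{n-1}\tfrac1j+k\sum_{j=k+1}^{n-1}\tfrac1j\bigr]$ (your expression matches the paper's $\frac{n-1}{\alpha}\bigl[(n-k)(H_{n-1}-H_{n-k})+k(H_{n-1}-H_{k-1})\bigr]$ after shifting one unit between the two harmonic blocks). The difference is how that identity is obtained and how it is expanded: the paper keeps $\alpha$ (indeed general $r=\alpha_{01}/\alpha_{10}$) in the tridiagonal system and invokes Glaz's explicit inversion formula, worked out in Appendix~\ref{app:inversion}, then expands via $H_n=\ln n+\gamma+\bigO(1/n)$; you first invoke Proposition~\ref{prop:async-unbiased} to set $\alpha=1$, exploit $p_k=q_k$ to turn the first-step recursion into a discrete Dirichlet problem, and solve it with the Green's function $\min(k,j)(n-\max(k,j))/n$ of the path Laplacian, finishing with an integral sandwich whose error you correctly check is uniform in $k$ (modulo the endpoint case $n-k=1$, which you rightly flag and which is harmless since there the sum is exactly $H_{n-1}=\ln n+\bigO(1)$). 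Your route is more self-contained and arguably cleaner for this theorem; the paper's route, by carrying the general ratio $r$ through formula~\eqref{eq:formula-orribile}, buys reusability, since the same computation is the backbone of the biased-case analysis in Theorem~\ref{thm:async-biased-clique-ect}, where your symmetric Green's function shortcut would not apply. One cosmetic slip: your parenthetical describes $p_k=q_k=\frac{k(n-k)}{n(n-1)}$ as ``the probability that the sampled node and its sampled neighbour hold opposite opinions,'' which is actually $p_k+q_k$; the formula you use is nevertheless the right one for each directed transition, so nothing downstream is affected.
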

\begin{proof}
On an $n$-clique, the process is equivalent to a birth-and-death
chain~\cite{Levin:2009} on $n+1$ states $0,1,2,\ldots,n$ (representing the
number of nodes with opinion, say, $1$).  Let us define the following
quantities: 
\begin{itemize}[noitemsep,topsep=0pt]
\item $p_k = \alpha_{01} k(n-k)/n(n-1)$ is the probability that the number of
nodes holding opinion 1 increases from $k$ to $k+1$ when $0 \le k < n$,
\item $q_k = \alpha_{10} k(n-k)/n(n-1)$ is the probability that the number of
nodes holding opinion 1 decreases from $k$ to $k-1$ when $0 < k \le n$. 
\end{itemize}
Note that $p_k=q_k$ for all $k$ due to the assumption $\alpha_{01}=\alpha_{10}$. 
Define the vector $T(n)$ as $T(n) = (T_1(n),\ldots,T_{n-1}(n))^\top$ and
consider the matrix
\[
B\!=\! \left(\begin{array}{ccccc}
p_1+q_1 & -p_1 & 0 & \ldots & 0 \\
-q_2 & p_2+q_2 & -p_2 & \ldots & 0 \\
\vdots & \ddots & \ddots & \ddots & \vdots \\
0 & \ldots & 0 & -q_{n-1} & p_{n-1} + q_{n-1} \\
\end{array}\right)\!. 
\]
The matrix $B$ is constructed so that $B T(n) = \mathbf{1}$, 
where $\mathbf{1}$ is the all-1 vector. This holds because of the recurrence
\[
	T_k(n) = 1 + (1 - p_k - q_k) T_k(n) + q_k T_{k-1}(n) + p_k T_{k+1}(n)
\]
for the mean consensus times.
Therefore, $T(n) = B^{-1} \mathbf{1}$. The matrix $B$ can be explicitly
inverted thanks to its tridiagonal structure; an explicit computation (see
Appendix~\ref{app:inversion}) yields
\begin{small}
\[
T_k(n) = \frac{n-1}{\alpha} \left( (n-k) (H_{n-1} - H_{n-k}) + k (H_{n-1} - H_{k-1}) \right), 
\]
\end{small}
where $H_k$ is the $k$-th harmonic number, $H_k = \sum_{j=1}^k 1/j$.  Recalling
the asymptotic expansion $H_n = \ln n + \gamma + \bigO(1/n)$, where $\gamma$ is
the Euler-Mascheroni constant, 
\begin{small}
\begin{align*}
    & T_k(n)
    = \frac{n-1}{\alpha} \!\left(
        (n-k) (H_{n} - H_{n-k})
        + k (H_{n} - H_{k})
    \right) \!+ \bigO\Big(\frac{n}{\alpha}\Big)
    \\
    &= \frac{n(n-1)}{\alpha} \!\left(
        \!\left( 1-\frac{k}{n} \right) \ln \frac{n}{n-k}
        + \frac{k}{n} \ln \frac{n}{k}
    \right) + \bigO(n/\alpha)
    \\
    &= \frac{n^2}{\alpha} h(k/n) + \bigO(n/\alpha),
\end{align*}
\end{small}
where $h(p) = -p \ln p - (1-p) \ln(1-p)$, which is such that 
$0 \le h(p) \leq \ln 2$ for every $p \in [0,1]$.
\end{proof}

\subsection{Synchronous variant}\label{subse:unbiased_synch}
The analysis of the synchronous variant in the unbiased setting relies 
on the tight connection between the unbiased case of the opinion 
dynamics we consider and (lazy) random walks on networks. 

\paragraph{Connections to lazy random walks.} We next provide an equivalent
formulation of our model, which reveals an interesting and useful connection to
lazy random walks. To this purpose, consider the following, alternative
dynamics, in which the behavior of the generic node $u$ at each iteration is
the following:
\begin{itemize}[noitemsep,topsep=0pt]
	\item Node $u$ independently tosses a coin with probability of ``heads'' 
	equal to $\alpha$;
	\item If ``heads'', $u$ samples a neighbor $v$ u.a.r.\ and copies
	$v$'s opinion; otherwise $u$ does nothing and keeps her opinion.
\end{itemize}

Let us call $\model_1$ the synchronous model described in
Section~\ref{se:vincenzo_model} and $\model_2$ the dynamics described above.
Then $\model_1$ and $\model_2$ are equivalent in the sense that, if they start
from the same initial state, they generate the same probability distribution
over all possible configurations of the system at any iteration $t$.
Intuitively speaking this is true since in $\model_1$ each node first samples a
neighbor and then it decides whether or not to copy its opinion according to
the outcome of a coin toss, while in $\model_2$ each node first tosses a coin
to decide whether or not to copy the opinion of one of the neighbors and then
it samples the neighbor. Since the outcome of the coin toss and the choice of
the neighbor are independent random variables, they produce the same
distribution on the new opinion of the node when commuted. In
Appendix~\ref{sec:coupling_unbiased_sync} we formalize the above equivalence
and we prove it by appropriately coupling the two processes using an inductive
argument.

Model $\model_2$ is interesting, since it describes a (lazy) voter model. As
such (and as we explicitly show in the proof of Theorem~\ref{cl:consensus_1}),
it is equivalent, in a probabilistic sense, to $n$ lazy, coalescing random
walks on the underlying network. This connection allows us to extrapolate the
probability of consensus to a particular opinion and to adapt techniques that
have been used to analyze the consensus time of the standard voter
model~\cite{hassin01voter, aldous2002reversible}.

\begin{theorem}\label{cl:consensus_1}
Assume model $\model_2$ starts in a configuration in which all nodes of a
subset $W \subset V$ have opinion~$1$ and all other nodes have opinion $0$. Let
$\phi$ and $\tcons$ denote \emph{fixation probability} (of opinion 1) and
\emph{time to consensus}, resp. Then:
\textsc{(i)}
	$\phi = ({\sum_{u\in W}d_u})/({\sum_{u \in V} d_u})$,
\textsc{(ii)}
	$ \Expec{}{\tcons}\le \beta_n \thit$,
where $\thit$ is the maximum expected hitting time associated with the graph
and $\beta_n = \bigO(1)$ when $\alpha\le 1/2$, while $\beta_n = \ln n + 3$ when
$\alpha > 1/2$.
\end{theorem}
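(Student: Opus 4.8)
The plan is to route everything through the duality between $\model_2$ and a system of $n$ \emph{coalescing lazy random walks} on $G$. Reading the update arrows of $\model_2$ backwards in time gives $x_u^{(t)} = x^{(0)}_{Y_u(t)}$, where $Y_u(\cdot)$ is the trajectory of a random walk from $u$ that at each step stays put with probability $1-\alpha$ and moves to a uniformly random neighbour with probability $\alpha$; since all of $\{Y_u(\cdot)\}_{u\in V}$ read the \emph{same} arrows, two such walks coincide forever once they meet, so they form $n$ coalescing $\alpha$-lazy walks. (I would assume $0<\alpha<1$, so these walks are aperiodic; for $\alpha=1$ on, e.g., $K_2$ consensus need not occur and one should fall back on the standard voter model.) From this I extract the two facts I need: (a) consensus has been reached by step $t$ once all $n$ walks have coalesced, so $\tcons \le C$ pathwise, where $C$ is the coalescence time of the $n$ walks; and (b) the consensus opinion is $x^{(0)}_v$ with $v$ the time-$0$ position of the last surviving walk.

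For \textsc{(i)} I would use a martingale argument, which avoids analysing the law of the surviving walk. Set $M_t := \sum_{u\in V} d_u\,\bone[x_u^{(t)}=1]$. A one-step computation with the lazy rule gives $\Expec{}{\bone[x_u^{(t+1)}=1]\mid\mathcal{F}_t} = (1-\alpha)\bone[x_u^{(t)}=1] + \alpha\sum_{v\in N(u)}\bone[x_v^{(t)}=1]/d_u$, and after multiplying by $d_u$ and summing over $u$ the off-diagonal terms cancel by symmetry of $A$, so $(M_t)$ is a bounded martingale. Consensus is absorbing and (by \textsc{(ii)}) reached in finite expected time, hence $M_\infty\in\{0,\sum_{u}d_u\}$ with $\Prob{}{M_\infty=\sum_u d_u}=\phi$; optional stopping then yields $\sum_{u\in W}d_u = M_0 = \Expec{}{M_\infty} = \phi\sum_{u\in V}d_u$, which is \textsc{(i)}.

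For \textsc{(ii)}, by (a) it suffices to bound $\Expec{}{C}$. Let $\mathsf{m} := \max_{u,v}\Expec{}{\tau_{u,v}}$ be the worst-case expected meeting time of two of the walks (independent until they collide). I would (1) bound $\mathsf{m}$ by a constant multiple of $\thit$ via a ``chasing'' coupling that turns meeting into hitting a controlled target, and (2) bound $\Expec{}{C}$ by $\mathsf{m}$ up to the factor $\beta_n$: pairing up the surviving walks, each pair meets within $2\mathsf{m}$ steps with probability at least $1/2$ by Markov, so the number of distinct walks contracts by a constant factor every $\bigO(\mathsf{m})$ steps in expectation. The delicate point—and the main obstacle—is pinning $\beta_n$ down: the coalescing-random-walk estimate is classical for \emph{continuous-time} chains, but transferring it to the \emph{synchronous discrete} lazy process forces care about periodicity and about steps in which both walks jump simultaneously. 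When $\alpha\le 1/2$ the holding probability $1-\alpha\ge 1/2$ makes the synchronous walk behave like a well-Poissonised continuous-time walk, so the transfer costs only an $\bigO(1)$ factor; when $\alpha>1/2$ the walks are ``under-lazy'' (e.g.\ two non-lazy synchronous walks on opposite sides of a bipartite graph never meet), and compensating for this is exactly what produces the $\ln n+3$ factor. Everything else—the duality and the martingale—is routine.
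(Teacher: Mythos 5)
Part \textsc{(i)} of your proposal is fine: your $M_t=\sum_u d_u\,\bone[x_u^{(t)}=1]$ is just the degree-biased (stationary-weighted) sum of opinions, your one-step computation is exactly the paper's identity $\Expec{}{\bx^{(t+1)}\mid\bx^{(t)}}=P\bx^{(t)}$ with $P$ the lazy walk matrix, and closing via optional stopping instead of taking $\lim_t P^t\bx^{(0)}$ is an equivalent (and legitimate) way to finish, provided you note, as you do, that absorption is a.s.\ finite by \textsc{(ii)}, which does not depend on \textsc{(i)}.

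Part \textsc{(ii)}, however, has two genuine gaps. First, the inequality $\max_{u,v}\Expec{}{M_{uv}}\le\thit$ (or even $\bigO(\thit)$) is the technical heart of the whole bound, and you dispose of it with an unspecified ``chasing coupling''. A coupling in which one walk chases the other does not obviously work: the target moves at every step, and bounding the time to catch a moving target by the worst-case hitting time of a \emph{fixed} target is false for general chains (two synchronized walks can avoid each other forever even though all hitting times are finite); the correct statement genuinely uses reversibility. The paper proves it by adapting the Aldous--Fill argument to discrete time, showing that $S_t=2t+f(X_t,Y_t)$ with $f(x,y)=\Expec{x}{\thit_y}-\Expec{\pi}{\thit_y}$ is a martingale up to the meeting time and applying optional stopping (Lemma~\ref{le:TvsM}); without an argument of this type your step (1) is unsupported.

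Second, the case $\alpha\le 1/2$. Your pairing/halving scheme can only yield $\Expec{}{\tcons}=\bigO(\mathsf{m}\log n)$; it cannot produce $\beta_n=\bigO(1)$, and the claim that a ``Poissonisation transfer'' costs only a constant misplaces the difficulty: even in continuous time, removing the $\log n$ factor from the coalescence bound is a deep theorem (Oliveira's result that the mean coalescence time is within a universal constant of the maximum hitting time, and its discrete analogues for lazy chains), not a routine time-change. The paper obtains $\beta_n=\bigO(1)$ precisely by observing that $P=(1-\alpha)I+\alpha Q$ is positive semidefinite when $\alpha\le 1/2$ and invoking those results, while the $\ln n+3$ for $\alpha>1/2$ comes from an exponential tail bound on pairwise meeting times plus a union bound over the $n-1$ walks --- not from periodicity, which is a non-issue for any $\alpha<1$ since the walk has positive holding probability. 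Note also that even your halving argument, if completed, would give a constant strictly worse than $\ln n+3$, so matching the stated bound would require the sharper tail-plus-integration computation in any case.
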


\begin{proof}[Sketch of the proof]
We here only give a short idea of the proof and we defer a full-detailed proof
to Appendix~\ref{apx:unbiased_sync_thm_proof}.

The proof of \textsc{(i)} follows from the observation that, if we call
$\bp(t)$ the vector $\bp(t) = (p_1(t), \dots, p_n(t))$ where $p_i(t)$ is the
probability that node $i$ has opinion $1$ at round $t$ conditional on the
configuration at the previous round $\bx^{(t-1)}$, then for every round $t$ it
holds that $\Expec{}{\bx^{(t+1)} \,|\, \bx^{(t)}} = \bp(t+1) = P \bx^{(t)}$
where $P$ is the transition matrix of a lazy simple random walk on the
underlying graph. Iterating the above equality we have that $\lim_{t\rightarrow
\infty}\Expec{}{\bx^{(t)} \,|\, \bx^{(0)}} = \pi^\intercal \bx^{(0)} \bone$,
where $\pi$ is the stationary distribution of the random walk. Finally, the
formula for $\phi$ follows from the fact that, for each node $i$,
$\lim_{t\rightarrow \infty}\Expec{}{x_i^{(t)} \,|\, \bx^{(0)}}$ equals the
probability that node $i$ ends up with opinion $1$ and from the fact that the
stationary probability of a simple random walk being on a node $i$ is
proportional to the degree of $i$.

The proof of \textsc{(ii)} is an adaptation to our (discrete) case of the proof
strategy for the continuous case described
in~\cite[Section~14.3.2]{aldous2002reversible}: we leverage on the relation
between the convergence time of $\model_2$ and the maximum \textit{meeting
time} of two lazy random walks and, by using an appropriate martingale, we show
that the maximum meeting time is upper bounded by the maximum hitting time (see
Lemma~\ref{le:TvsM}).
\end{proof}

\section{The biased setting}\label{se:biased}
Without loss of generality, in the rest of this section we assume 
$\alpha_{01} \neq \alpha_{10}$ and we let $r = \alpha_{01}/\alpha_{10}$. 
In general, the biased setting is considerably harder to address, since 
the connection between our model and lazy random walks no longer 
applies in this setting, nor does it seem easy to track the evolution 
of the expected behavior of the model in a way that is mathematically 
useful. 

\subsection{Asynchronous variant}\label{subse:biased_asynch}
In the asynchronous case, we give a result for the fixation probability holding
for regular graphs, thanks to an equivalence with the fixation probability for
the $n$-clique (see~Appendix \ref{sec:async-fixation-reg}). We also bound the
expected consensus time in the specific case of the $n$-clique.  The
asynchronous variant of the process on the $n$-clique is equivalent to a
birth-and-death chain on $n+1$ states $0,1,2,\ldots,n$ representing the number
of nodes with opinion $1$. Now, however, the transition probabilities will not
be symmetric. In fact, the transition probabilities can be specified by $\{p_k,
q_k, r_k\}_{k=0}^n$ where $p_k+q_k+r_k = 1$, and:

\begin{itemize}[noitemsep,topsep=0pt]
\item $p_k$ is the probability that the number of nodes holding opinion 1
increases from $k$ to $k+1$ when $0 \le k < n$,
\item $q_k$ is the probability that the number of nodes holding opinion 1
decreases from $k$ to $k-1$ when $0 < k \le n$,
\item $r_k$ is the probability that the number of nodes holding opinion 1
remains $k$ when $0 \le k \le n$.
\end{itemize}
Due to our definition of the opinion dynamics, we have $p_0 = q_n = 0$ and, for
$0 < k < n$, 
\begin{align*}
p_k &= \left(1 - \frac{k}{n} \right) \cdot \frac{k}{n-1} \cdot \alpha_{01} =
\alpha_{01} \frac{k(n-k)}{n(n-1)}, \\
q_k &= \frac{k}{n} \cdot \frac{n-k}{n-1} \cdot \alpha_{10} = \alpha_{10}
\frac{k(n-k)}{n(n-1)}, \\
r_k &= 1 - (\alpha_{01}+\alpha_{10}) \frac{k(n-k)}{n(n-1)}.
\end{align*}

\begin{theorem}
\label{thm:async-biased-clique-fp}
Let $r = \alpha_{01}/\alpha_{10}$ and let $\phi_k$ be the fixation probability
of opinion $1$ on a regular $n$-nodes graph starting from a state in which $k$
nodes hold opinion $1$. Then, for $r \notin \{0,1\}$, 
\begin{equation}
\label{eq:fixprob-async}
\phi_k = \frac{1 - r^{-k}}{1 - r^{-n}}. 
\end{equation}
\end{theorem}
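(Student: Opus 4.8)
The plan is to exploit the birth-and-death structure already set up in the excerpt. Since the fixation probability of opinion $1$ on a regular graph equals that on the $n$-clique (by the equivalence deferred to Appendix~\ref{sec:async-fixation-reg}), it suffices to compute $\phi_k$ for the birth-and-death chain on states $\{0,1,\dots,n\}$ with transition probabilities $p_k, q_k, r_k$ as given. The key observation is that $r_k$ (the self-loop probability) is irrelevant for the fixation probability: conditioning on the chain actually moving, the probability of a step up versus a step down at state $k$ is $p_k/(p_k+q_k) = \alpha_{01}/(\alpha_{01}+\alpha_{10})$ and $q_k/(p_k+q_k) = \alpha_{10}/(\alpha_{01}+\alpha_{10})$, \emph{independent of $k$}. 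So fixation of opinion $1$ is exactly the event that a lazy, biased random walk on $\{0,\dots,n\}$ with constant up/down ratio $r = \alpha_{01}/\alpha_{10}$ is absorbed at $n$ rather than at $0$, started from $k$ — a classical Gambler's Ruin computation.

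Concretely, I would write the standard first-step (harmonic) recurrence for the absorption probabilities. Let $\phi_k$ denote the probability of being absorbed at $n$ starting from $k$. Then $\phi_0 = 0$, $\phi_n = 1$, and for $0 < k < n$,
\[
\phi_k = p_k \phi_{k+1} + q_k \phi_k + r_k \phi_k,
\]
which, after cancelling $(p_k+q_k)\phi_k$ from both sides and dividing by $p_k + q_k = (\alpha_{01}+\alpha_{10})k(n-k)/(n(n-1)) > 0$ (valid for $0<k<n$), reduces to
\[
(\alpha_{01}+\alpha_{10})\phi_k = \alpha_{01}\phi_{k+1} + \alpha_{10}\phi_{k-1}.
\]
Setting $\delta_k := \phi_{k+1} - \phi_k$, this rearranges to $\alpha_{01}\delta_k = \alpha_{10}\delta_{k-1}$, i.e.\ $\delta_k = r^{-1}\delta_{k-1}$, so $\delta_k = r^{-k}\delta_0$. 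Telescoping from $0$ to $n$ and using $\phi_0 = 0$, $\phi_n = 1$ gives $1 = \sum_{j=0}^{n-1}\delta_j = \delta_0 \sum_{j=0}^{n-1} r^{-j}$; hence $\phi_k = \delta_0 \sum_{j=0}^{k-1} r^{-j} = \big(\sum_{j=0}^{k-1} r^{-j}\big)/\big(\sum_{j=0}^{n-1} r^{-j}\big)$. For $r \notin \{0,1\}$ each geometric sum evaluates in closed form, the common factor $1/(1 - r^{-1})$ cancels, and we obtain $\phi_k = (1 - r^{-k})/(1 - r^{-n})$, as claimed.

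There is no real obstacle in the computation itself — it is a textbook Gambler's Ruin argument once one notices the $k$-independence of the conditional drift. The one point that deserves care, and which I would state explicitly, is the reduction from regular graphs to the clique: on a $d$-regular graph the number of opinion-$1$ nodes is \emph{not} in general a Markov chain, so the equivalence is at the level of the fixation probability only, and it relies on the cited appendix result. I would also remark that the laziness ($r_k > 0$) genuinely does not matter here, in contrast to the consensus-time analysis, and note that the boundary cases $r \in \{0,1\}$ are excluded precisely because the formula degenerates ($r = 1$ recovers $\phi_k = k/n$ by L'Hôpital or by going back to the sum $\sum_{j=0}^{k-1} 1 = k$; $r = 0$ is the unbiased-rejection degenerate case).
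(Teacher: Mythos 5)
Your argument is correct and takes essentially the same route as the paper: both reduce to the $n$-clique via the appendix equivalence and then solve the constant-ratio birth-and-death (gambler's ruin) absorption problem, the only difference being that you derive the closed form from the first-step recurrence while the paper cites the standard formula $\phi_k = \bigl(1+\sum_{i=1}^{k-1}\prod_{j=1}^{i}\gamma_j\bigr)/\bigl(1+\sum_{i=1}^{n-1}\prod_{j=1}^{i}\gamma_j\bigr)$ with $\gamma_j = q_j/p_j = 1/r$. One small slip: your displayed recurrence should read $\phi_k = p_k\phi_{k+1} + q_k\phi_{k-1} + r_k\phi_k$ (you wrote $q_k\phi_k$ in place of $q_k\phi_{k-1}$), but the equation you then derive, and all the algebra that follows, is correct.
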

\begin{proof}
Thanks to the equivalence of the fixation probability between a regular
$n$-nodes graph and an $n$-clique (see Appendix~\ref{sec:async-fixation-reg}),
and by the analysis of a general birth-death process (see for
example~\cite[Section 6.2]{nowak2006evolutionary}), we get
\[
\phi_k = \frac{1 + \sum_{i=1}^{k-1} \prod_{j=1}^i \gamma_j}{1 + \sum_{i=1}^{n-1} \prod_{j=1}^i \gamma_j},
\]
where $\gamma_j = q_j/p_j = \alpha_{10} / \alpha_{01} = 1/r$ for all $j$. Hence
\[
\phi_k = \frac{1 + \sum_{i=1}^{k-1} r^{-i}}{1 + \sum_{i=1}^{n-1} r^{-i}} = 
\frac{1 + \frac{1 - r^{-k}}{1 - r^{-1}} - 1}{1 + \frac{1 - r^{-n}}{1 - r^{-1}} - 1} =
\frac{1 - r^{-k}}{1 - r^{-n}}. 
\]
Note that when $r \to 1$, we can evaluate $\phi_k$ by applying L'H\^{o}pital's
rule to~\eqref{eq:fixprob-async} and get $\phi_k=k/n$, which is consistent with
the results for the unbiased setting. When $r = 0$, nodes can only switch from
opinion $1$ to opinion $0$, so clearly $\phi_k=0$; this is also consistent with
\eqref{eq:fixprob-async} when $r \to 0$. 
\end{proof}

It is interesting to note that the expression from~\eqref{eq:fixprob-async}
coincides with the fixation probability in the standard Moran process
\cite[Chapter 6]{nowak2006evolutionary} when mutants (say, nodes with opinion
1) have a \emph{relative fitness} equal to $\alpha_{01}/\alpha_{10}$, and in
the initial configuration there are $k$ mutants out of $n$ nodes. In other
words, on an $n$-nodes regular graph the ratio $\alpha_{01}/\alpha_{10}$ can be
interpreted as a \emph{fitness} of sorts, even though there is no notion of
fitness or selection built in our model (recall that nodes are activated
uniformly at random). 

For the $n$-clique we are also able to bound the expected consensus time.
While the structure of the proof is similar to the one of
Theorem~\ref{thm:exact-clique}, the proof itself is considerably more involved
in the asymmetric setting, leading to qualitatively different results---namely,
an $O(n \log n)$ instead of an $O(n^2)$ worst case bound (see
Appendix~\ref{sec:biased-async-appendix}). 

\begin{theorem}\label{thm:async-biased-clique-ect}
If $T_k(n)$ is the expected consensus time in the $n$-clique when starting from
a state with $k$ nodes holding opinion 1, and $\alpha_{01},\alpha_{10}$ are
constants, then for each $k=1,\ldots,n-1$, 
\[
	T_k(n) = O(n \log n), 
\]
and for some values of $k$ the above bound is tight. 
\end{theorem}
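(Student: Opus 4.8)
The plan is to mimic the proof of Theorem~\ref{thm:exact-clique}: reduce the process on the $n$-clique to the birth-and-death chain on $\{0,1,\dots,n\}$ described above, obtain an \emph{exact} closed form for the mean absorption times $T_k(n)$ by solving the associated (now non-symmetric) tridiagonal linear system, and then read off the asymptotics. The reduction is already in place; moreover, relabelling the two opinions (equivalently, applying $k\mapsto n-k$ and swapping $\alpha_{01}\leftrightarrow\alpha_{10}$) lets us assume without loss of generality that $r=\alpha_{01}/\alpha_{10}>1$, so that $\rho:=q_k/p_k=1/r$ is a constant strictly smaller than $1$.

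First I would pass to the first differences $D_k:=T_k(n)-T_{k-1}(n)$. The recurrence $(p_k+q_k)T_k(n)=1+q_kT_{k-1}(n)+p_kT_{k+1}(n)$ turns into the first-order linear recurrence $D_{k+1}=\rho\,D_k-1/p_k$, with $1/p_k=n(n-1)/(\alpha_{01}k(n-k))$; since $\rho$ is constant this solves explicitly, and imposing $T_0(n)=T_n(n)=0$ fixes $D_1$. Carrying out this computation---the analogue of the inversion in Appendix~\ref{app:inversion}, but with geometric instead of arithmetic weights---gives
\[
T_k(n)=\frac{n(n-1)}{\alpha_{01}(1-\rho)}\left(\frac{1-\rho^{k}}{1-\rho^{n}}\sum_{j=1}^{n-1}\frac{1-\rho^{n-j}}{j(n-j)}-\sum_{j=1}^{k-1}\frac{1-\rho^{k-j}}{j(n-j)}\right),
\]
which one should sanity-check against Theorem~\ref{thm:exact-clique} by letting $\rho\to1$ (via L'H\^opital's rule). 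For the \emph{upper bound}, since $\rho<1$ the subtracted sum is nonnegative, the prefactor is positive, and $\frac{1-\rho^{k}}{1-\rho^{n}}\le\frac{1}{1-\rho}$, so $T_k(n)\le\frac{n(n-1)}{\alpha_{01}(1-\rho)^{2}}\sum_{j=1}^{n-1}\frac{1}{j(n-j)}$; using $\sum_{j=1}^{n-1}\frac{1}{j(n-j)}=\frac{2H_{n-1}}{n}$ and the constancy of $\alpha_{01},\rho$, this is $\bigO(n\log n)$ uniformly in $k$.

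For \emph{tightness} I would take $k=\lfloor n/2\rfloor$. The point is that here the leading sum and the subtracted sum are both of order $(\log n)/n$, and one must show they do not cancel at that order: splitting off the geometric part and using the identity $\sum_{m\ge1}\rho^{m}/m=-\ln(1-\rho)$ shows the leading sum equals $\frac{2H_{n-1}}{n}+\bigO(1/n)$, while the same estimate gives $\sum_{j=1}^{\lfloor n/2\rfloor-1}\frac{1-\rho^{k-j}}{j(n-j)}=\frac{H_{n-1}}{n}+\bigO(1/n)$, so the bracket equals $\frac{H_{n-1}}{n}+\bigO(1/n)=\Omega((\log n)/n)$ and hence $T_{\lfloor n/2\rfloor}(n)=\Omega(n\log n)$. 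I expect this last step to be the main obstacle: unlike the upper bound, it is a difference of two quantities that are individually of order $(\log n)/n$, so one has to track the constants in the harmonic-sum asymptotics, including the corrections induced by the geometric $\rho^{m}$ factors. A minor additional point is to confirm the closed form is valid for every $k\in\{1,\dots,n-1\}$ and every admissible choice of the (constant) parameters, e.g.\ the boundary case $\alpha_{10}=0$ (i.e.\ $\rho=0$), which the formula still subsumes.
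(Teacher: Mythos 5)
Your plan is correct, and it follows the same overall strategy as the paper: reduce to the biased birth-and-death chain on $\{0,\dots,n\}$, obtain an exact expression for the absorption time $T_k(n)$, and then extract the asymptotics, proving the $O(n\log n)$ upper bound uniformly in $k$ and tightness at $k\approx n/2$. The difference is in how the exact formula is obtained and in its shape. The paper starts from $T(n)=B^{-1}\mathbf{1}$ and invokes Glaz's explicit inverse of the tridiagonal matrix, then simplifies the resulting triple-product expression into a decomposition $T_k(n)=\frac{n(n-1)}{\alpha_{10}(1-r)(1-r^n)}(S_1-S_2)$ (with the WLOG convention $r<1$ rather than your $\rho=1/r<1$, which is immaterial), and bounds $n(S_1-S_2)$ above and below. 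You instead solve the system directly via the first differences $D_k=T_k(n)-T_{k-1}(n)$, using the constant ratio $\rho=q_k/p_k$ to get $D_{k+1}=\rho D_k-1/p_k$; your resulting closed form is correct (it is the standard two-barrier birth-death absorption formula, and it does reduce to Theorem~\ref{thm:exact-clique} as $\rho\to1$). This route is more elementary and self-contained than citing Glaz, and it pays off in the upper bound: since the subtracted sum is nonnegative and $\frac{1-\rho^k}{1-\rho^n}=O(1)$, the bound $T_k(n)\le \frac{n(n-1)}{\alpha_{01}(1-\rho)^2}\cdot\frac{2H_{n-1}}{n}=O(n\log n)$ is essentially one line, whereas the paper bounds three separate terms. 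For tightness both arguments are analogous: at $k\approx n/2$ one must show the two sums of order $(\log n)/n$ do not cancel, and your estimates (leading sum $\frac{2H_{n-1}}{n}+O(1/n)$ using $\sum_{m\ge1}\rho^m/m=-\ln(1-\rho)$, subtracted sum $\frac{H_{n-1}}{n}+O(1/n)$ using $H_{k-1}-H_{n-k}=O(1/n)$, and $\frac{1-\rho^k}{1-\rho^n}=1+O(\rho^{n/2})$) do check out, matching the paper's $\Omega(n\log n)$ at $k=(n+1)/2$.
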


\subsection{Synchronous variant}\label{subse:biased_synch}
In order to bound the fixation probabilities, denote by $\bx^{(t)} \in\{0,
1\}^n$ the state of the system at time $t$.  Conditioned on the state vector
$\mathbf x^{(t)}$, the probability that $x^{(t+1)}_u=1$ can be expressed as
follows: 

\begin{align*}
&\Prob{}{x^{(t+1)}_u = 1 \,|\, \bx^{(t)}}\\ 
&=\begin{cases} 
1 - \alpha_{10} \left(1 - \frac{ \sum_{v \in V} a_{uv} x_v^{(t)} }{d_u} \right) & \text{if } x_u^{(t)} = 1 \\
\alpha_{01} \frac{ \sum_{v \in V} a_{uv} x_v^{(t)} }{d_u} & \text{if } x_u^{(t)} = 0.
\end{cases}
\end{align*}
This follows since the probability that node $u$ samples a neighbor 
with opinion 1 is $\sum_v a_{uv} x_v^{(t)}/d_u$ and: 
\begin{itemize}[noitemsep,topsep=0pt]
\item when $x_u^{(t)}=1$, then $x_u^{(t+1)}=1$ iff either $u$ samples a
neighbor with opinion $1$, or $u$ samples a neighbor with opinion 0 and does not
accept its opinion (these two events are disjoint); 
\item when $x_u^{(t)}=0$, then $x_u^{(t+1)}=1$ iff $u$ samples a neighbor with
opinion $1$ and accepts its opinion. 
\end{itemize}
 
We did not exploit the graph topology so far. In the case of the $n$-clique (with 
loops, to simplify some expressions), let $k^{(t)}$ be the number of nodes 
with opinion 1 at time $t$. Specializing the formulas derived above we get
\[
\Prob{}{x^{(t+1)}_u=1 \,|\, \bx^{(t)}} = 
\begin{cases} 
1 - \alpha_{10} \left(1 - \frac{ k^{(t)} }{n} \right) & \text{if } x_u^{(t)} = 1 \\
\alpha_{01} \frac{ k^{(t)} }{n} & \text{if } x_u^{(t)} = 0.
\end{cases}
\]
Note that the expression above depends only on $k^{(t)}$ and $x_u^{(t)}$, 
and not on the entire state $\bx^{(t)}$. 
The process is thus equivalent to sampling, at each step $t$, 
$k^{(t)}$ Bernoulli random variables (r.v.)~with parameter $\beta_k := 
1 - \alpha_{10}(1-k^{(t)}/n)$, and $n-k^{(t)}$ bernoulli r.v.~with 
parameter $\gamma_k := \alpha_{01} k^{(t)}/n$. Collectively, the 
outcomes of these r.v.~constitute the new state $\bx(t+1)$.
Then, 
\begin{small}
	\begin{align*}
	\Expec{}{k^{(t+1)} \,|\, k^{(t)}} %&= 
	&= (n - k^{(t)}) \alpha_{01} \frac{k^{(t)}}{n} + k^{(t)} \left(1 - \alpha_{10}( 1- \frac{k^{(t)}}{n} ) \right)
	\end{align*}
\end{small}
which, posing $y^{(t)}=k^{(t)}/n$, can be written as 
\begin{equation}
\label{eq:condex-y}
\Expec{}{y^{(t+1)} \,|\, y^{(t)}} = y^{(t)} + (\alpha_{01}-\alpha_{10}) y^{(t)} (1-y^{(t)}).
\end{equation}

\begin{proposition}\label{prop:fixation-clique-sync}
Assume $\alpha_{01}\le \alpha_{10}$. Then the fixation probability of opinion
0 is at least the fraction of agents holding opinion 0. 
\end{proposition}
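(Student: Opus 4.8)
The plan is to read off from \eqref{eq:condex-y} that the sequence of fractions $(y^{(t)})_{t\ge 0}$ of opinion-$1$ nodes on the $n$-clique is a bounded supermartingale, and then apply optional stopping at the consensus time. Concretely, since $\alpha_{01}\le\alpha_{10}$ the coefficient $\alpha_{01}-\alpha_{10}$ is non-positive and $y^{(t)}(1-y^{(t)})\ge 0$, so $\Expec{}{y^{(t+1)}\,|\,y^{(t)}} = y^{(t)} + (\alpha_{01}-\alpha_{10})\,y^{(t)}(1-y^{(t)}) \le y^{(t)}$, while $0\le y^{(t)}\le 1$ for all $t$.

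First I would restate the claim in these terms. Let $\tcons$ be the consensus time. On the clique a consensus configuration is one with $y^{(\tcons)}\in\{0,1\}$, and $y^{(\tcons)}=1$ holds precisely on the event that opinion $1$ fixates while $y^{(\tcons)}=0$ holds precisely on the event that opinion $0$ fixates; hence the fixation probability of opinion $0$ equals $1-\Expec{}{y^{(\tcons)}}$. Since the fraction of agents initially holding opinion $0$ is $1-y^{(0)}$, the statement to prove is exactly $\Expec{}{y^{(\tcons)}}\le y^{(0)}$.

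Second, I would verify that $\tcons<\infty$ almost surely, so that $y^{(\tcons)}$ is well defined and the limit argument below is legitimate. The number of opinion-$1$ nodes evolves as a finite Markov chain on $\{0,1,\dots,n\}$ whose only absorbing states are $0$ and $n$; from any non-consensus state, consensus is reached with positive probability in finitely many steps (if $\alpha_{01}>0$, in a single step every zero-node may flip to $1$ and every one-node may retain opinion $1$, each with positive probability, since $\beta_{k}>0$ and $\gamma_k>0$ for $0<k<n$; if $\alpha_{01}=0$, opinion $1$ cannot spread, and using $\alpha_{10}>0$, which holds under the standing assumption $\alpha_{01}\ne\alpha_{10}$, the chain reaches the all-zero state). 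Hence absorption occurs almost surely.

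Finally, I would invoke the optional stopping theorem for the bounded supermartingale $(y^{(t)})$: for every fixed $t$, $\Expec{}{y^{(t\wedge\tcons)}}\le y^{(0)}$, and since $y^{(t\wedge\tcons)}\in[0,1]$ and $y^{(t\wedge\tcons)}\to y^{(\tcons)}$ almost surely, dominated convergence yields $\Expec{}{y^{(\tcons)}}\le y^{(0)}$, which completes the proof. I do not anticipate a genuine difficulty here: the supermartingale property is a one-line consequence of \eqref{eq:condex-y}, and the only step needing mild care is the almost-sure finiteness of $\tcons$ (which licenses the passage to the limit); keeping straight the correspondence between ``fraction of opinion-$0$ nodes'' and the complement $1-y^{(0)}$ is purely bookkeeping.
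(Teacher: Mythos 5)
Your proof is correct and rests on the same key observation as the paper's: the drift identity \eqref{eq:condex-y} together with $\alpha_{01}\le\alpha_{10}$ makes $(y^{(t)})_t$ a bounded supermartingale. The only difference is the finishing step: you stop the supermartingale at $\tcons$ and use optional stopping plus dominated convergence (after checking $\tcons<\infty$ almost surely), whereas the paper takes unconditional expectations, notes that $\bigl(\Expec{}{y^{(t)}}\bigr)_t$ is monotone and bounded, and identifies its limit with the fixation probability via convergence in distribution of $y^{(t)}$ to a Bernoulli limit. The two routes are essentially equivalent; if anything, yours makes explicit the almost-sure absorption argument that the paper leaves implicit, at the modest cost of invoking the optional stopping theorem.
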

\begin{proof}
Under the assumption $\alpha_{01}\le \alpha_{10}$, 
\begin{align*}
&\Expec{}{y^{(t+1)}} = \Expec{}{\Expec{}{y^{(t+1)} \,|\, y^{(t)}}} \\
&= \Expec{}{y^{(t)}} + (\alpha_{01}-\alpha_{10}) \Expec{}{y^{(t)}(1-y^{(t)})} 
\le \Expec{}{y^{(t)}}. 
\end{align*}
Hence, the succession $ (\Expec{}{y^{(t)}})_t$ is monotone and bounded and
attains a limit. This limit must coincide with the fixation probability,
because $y^{(t)}$ converges in distribution to a bernoulli random variable
$y^{(\infty)}$ and
$\Expec{}{y^{(\infty)}}=\Prob{}{y^{(\infty)}=1}=\Prob{}{\exists t: y^{(t)}=1}$
equals the fixation probability. Since $\Expec{}{y^{(0)}} = y^{(0)} =
k^{(0)}/n$, the fixation probability of opinion 1 must be at most $k^{(0)}/n$,
so that of opinion 0 is at least $1-k^{(0)}/n$. 
\end{proof}
Regarding the expected consensus time, we show the following by using the
technique of \emph{drift analysis}~\cite{lengler2018drift}. 
\begin{theorem}
\label{th:biased-sync-time}
If $T_k(n)$ is the expected consensus time in the $n$-clique when starting from a configuration with $k$ nodes holding opinion 1, and $\alpha_{01}=\alpha_{10}-\epsilon$, then
\[
T_k(n) \le \frac{n k}{\epsilon (n-1)}. 
\]
In particular, $T_k(n) \le \min( 2k/\epsilon, n/\epsilon)$ for each $k=1,2,\ldots,n-1$. 
\end{theorem}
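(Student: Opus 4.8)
The plan is to exploit the fact, already established in the discussion preceding the statement, that on the $n$-clique the process reduces to the one-dimensional chain $k^{(t)}$ (the number of opinion-$1$ nodes) with the conditional expectation given by~\eqref{eq:condex-y}, and then to run a standard additive drift argument~\cite{lengler2018drift}. Writing $\epsilon = \alpha_{10}-\alpha_{01}>0$ and multiplying~\eqref{eq:condex-y} by $n$, the one-step downward drift of $k^{(t)}$ is
\[
\Expec{}{k^{(t)} - k^{(t+1)} \,|\, k^{(t)}} = \epsilon\,\frac{k^{(t)}\bigl(n - k^{(t)}\bigr)}{n},
\]
so $k^{(t)}$ is a supermartingale absorbed exactly at the consensus states $k=0$ and $k=n$. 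The first step is to bound this drift from below uniformly over the non-absorbed states: for every integer $k\in\{1,\dots,n-1\}$ the product $k(n-k)$ attains its minimum over that range at the endpoints $k=1$ and $k=n-1$, where it equals $n-1$; hence the downward drift is at least $\delta := \epsilon(n-1)/n$ whenever $0<k^{(t)}<n$.

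Next I would convert the uniform drift bound into a bound on the consensus time $T$ via the nonnegative potential $\phi(k)=k$. Concretely, set $N_t := k^{(t\wedge T)} + \delta\,(t\wedge T)$; the drift bound shows that $N_t$ is a supermartingale, so $\Expec{}{N_t}\le N_0 = k$, and since $k^{(t\wedge T)}\ge 0$ this yields $\delta\,\Expec{}{t\wedge T}\le k$ for all $t$. Letting $t\to\infty$ by monotone convergence gives $\Expec{}{T}\le k/\delta = nk/\bigl(\epsilon(n-1)\bigr)$, which is exactly the claimed bound. (Equivalently one can cite the additive drift theorem directly; the only point requiring care is that here the absorbing/target set is $\{0,n\}$ rather than a single state, so it is cleanest to use the supermartingale formulation above, which uses only $\phi\ge 0$ and not that $\phi$ vanishes on the target.)

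Finally, the ``in particular'' clause follows by pure arithmetic from the main bound: since $n/(n-1)\le 2$ for $n\ge 2$ we get $nk/\bigl(\epsilon(n-1)\bigr)\le 2k/\epsilon$, and since $k\le n-1$ we get $nk/\bigl(\epsilon(n-1)\bigr)\le n/\epsilon$, whence $T_k(n)\le \min(2k/\epsilon,\,n/\epsilon)$. I do not expect a serious obstacle: the drift computation is immediate from~\eqref{eq:condex-y}, and the only genuine points needing attention are (i) extracting the uniform lower bound $\delta=\epsilon(n-1)/n$ from the boundary behavior of $k(n-k)$ (this is also what makes the stated constant exact), and (ii) the two-endpoint absorption, handled by the supermartingale argument as above.
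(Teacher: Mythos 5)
Your argument is correct, and it is the same underlying idea as the paper's: an additive drift analysis of the one-dimensional clique chain with the uniform per-step drift bound $\delta$ (your $\epsilon(n-1)/n$ in $k$-units, the paper's $\epsilon\frac{1}{n}\bigl(1-\frac{1}{n}\bigr)$ in $k/n$-units), yielding exactly $T_k(n)\le nk/(\epsilon(n-1))$. The difference is in how the two absorbing states $\{0,n\}$ are handled, which is precisely the point where the off-the-shelf drift theorem of Lengler does not apply. The paper re-derives Lengler's key inequality on unconditional expectations: it decomposes $\Expec{}{\nmut^{(t)}}$ over the events $\{T>t\}$ and $\{T\le t\}$ and uses the cancellation $\Prob{}{\nmut^{(t+1)}=1\mid T\le t}=\Prob{}{\nmut^{(t)}=1\mid T\le t}$ to obtain $\delta\,\Prob{}{T>t}\le\Expec{}{\nmut^{(t)}}-\Expec{}{\nmut^{(t+1)}}$, then sums over $t$. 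You instead stop the process and use the supermartingale $N_t=k^{(t\wedge T)}+\delta\,(t\wedge T)$, observing that only nonnegativity of the potential is needed (not that it vanishes on the target set), so the upper absorbing state causes no trouble; $\Expec{}{N_t}\le k$ plus monotone convergence finishes the proof. Both routes are rigorous and give identical constants; yours is somewhat more self-contained and arguably cleaner, since it avoids the case analysis over $\{T\le t\}$, while the paper's phrasing stays closer to the cited drift-analysis template, making the relation to \cite[Theorem 2.1]{lengler2018drift} explicit. Your derivation of the ``in particular'' clause ($n/(n-1)\le 2$ and $k\le n-1$) is the same elementary arithmetic as the paper's.
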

\begin{proof}
We adapt a proof of~\cite[Theorem 2.1]{lengler2018drift} to our setting, since
their result is not suitable for systems with more than one absorbing state. In
the remainder, we assume $\alpha_{01} < \alpha_{10}$, we let $\epsilon =
\alpha_{10} - \alpha_{01}$, and we let $\nmut^{(t)} = k^{(t)}/n$, i.e.,
$\nmut^{(t)}$ is the fraction of agents with opinion $1$ at time $t$.  We begin
by defining the following stopping time:
\[
	T := \inf\{t\ge 0: \nmut^{(t)}\in\{0, 1\}\}.
\]
This definition is akin to the one given in~\cite[Theorem
2.1]{lengler2018drift}, but it accounts for the presence of two absorbing
states in the Markov chain defined by $\nmut^{(t)}$. Moreover, $\nmut^{(t)} =
z^{(t-1)}$ for every $t > T$, since $\nmut^{(t)}$ does not change after
absorption (regardless of the absorbing state).  We next note that for every
$t$, $\nmut^{(t)}\in\mathcal{S} = \left\{0,\frac{1}{n},\ldots , 1 -
\frac{1}{n}, 1\right\}$. Moreover, for every $s\in\mathcal{S}$ we have
$\Expec{}{\nmut^{(t+1)}\mid \nmut^{(t)} = s} = s - \epsilon s(1 - s)$, whence:
\[
	\Expec{}{\nmut^{(t)} - \nmut^{(t+1)}\mid \nmut^{(t)} = s} = \epsilon 
	s(1 - s),
\]
with the last quantity at least $\epsilon\frac{1}{n}\left(1 - 
\frac{1}{n}\right)$ for $s\in\mathcal{S} \setminus \{0, 1\}$. Next:
\begin{align*}
	&\Expec{}{\nmut^{(t+1)}\mid T > t}\\ 
	&= \sum_{s=1}^{n-1}\Expec{}{\nmut^{(t+1)}\mid \nmut^{(t)} = \frac{s}{n}}\cdot\Prob{}{\nmut^{(t)} 
	= \frac{s}{n}\mid T > t},
\end{align*}
where the first equality follows since, for $s\not\in\{0, 1\}$, 
$\nmut^{(t)} = s$ implies $T > t$. Similarly to the proof of \cite[Theorem 
2.1]{lengler2018drift}, the equality above implies
\begin{equation}\label{eq:diff}
	\Expec{}{\nmut^{(t)} - \nmut^{(t+1)}\mid T > 
	t}\ge\epsilon\frac{1}{n}\left(1 - \frac{1}{n}\right).
\end{equation}
We let $\delta = \epsilon\frac{1}{n}\left(1 - \frac{1}{n}\right)$ for 
conciseness. We next have:
\begin{align*}
	\Expec{}{\nmut^{(t)}} &\stackrel{(a)}{=} \Expec{}{\nmut^{(t)}\mid T > t}\Prob{}{T > t} +
		\\&\qquad + \Prob{}{\nmut^{(t)} = 1\mid T\le t}\cdot\Prob{}{T\le t}
\end{align*}
and therefore
\begin{align*}
	&\Expec{}{\nmut^{(t+1)}} \stackrel{(a)}{=} \Expec{}{\nmut^{(t+1)}\mid T > t}\cdot\Prob{}{T > t} +
		\\&\qquad + \Prob{}{\nmut^{(t+1)} = 1\mid T\le t}\cdot\Prob{}{T\le t}\\
	&\stackrel{(b)}{\le} \left(\Expec{}{\nmut^{(t)}\mid T > t} - \delta\right)\cdot\Prob{}{T > t} +
		\\&\qquad + \Prob{}{\nmut^{(t+1)} = 1\mid T\le t}\cdot\Prob{}{T\le t}\\
	&\stackrel{(c)}{=} \Expec{}{\nmut^{(t)}} - \Prob{}{\nmut^{(t)} = 1\mid T\le t}\cdot\Prob{}{T\le t} +
		\\&\qquad + \Prob{}{\nmut^{(t+1)} = 1\mid T\le t}\cdot\Prob{}{T\le t} - \delta\cdot\Prob{}{T > t}.
\end{align*}

In the derivations above, $(a)$ simply follows from the law of total
probability, considering that $T\le t$ implies $\nmut^{(t)}\in\{0, 1\}$, $(b)$
follows from~\eqref{eq:diff}, while $(c)$ follows by replacing the equation of
$\Expec{}{\nmut^{(t)}}$ into the last step of the derivation. Next, we note
that
\[
	\Prob{}{\nmut^{(t+1)} = 1\mid T\le t} = \Prob{}{\nmut^{(t)} = 1\mid T\le t}
\]
by definition of the $\nmut^{(t)}$, whence we obtain:
\begin{equation}\label{eq:diff2}
	\delta\cdot\Prob{}{T > t}\le\Expec{}{\nmut^{(t)}} - 
	\Expec{}{\nmut^{(t+1)}}.
\end{equation}
Now, observe that \eqref{eq:diff2} is exactly \cite[(2.4) in Theorem 
2.1]{lengler2018drift}. From this point, the proof proceeds exactly as in 
\cite[(2.4) in Theorem 2.1]{lengler2018drift}, so that we finally have:
\[
	\Expec{}{T}\le\frac{z^{(0)}}{\delta} = \frac{n k}{\epsilon(n-1)},
\]
if at time $t = 0$ we have $k$ agents with opinion $1$.
\end{proof}

\section{Conclusions and Outlook}
Natural directions for future work include considering more opinions 
and general topologies.

\paragraph{More opinions.} The case of more opinions presents no major
challenges in the unbiased case, both in its asynchronous and synchronous
variants, something we did not discuss for the sake of space. In this case, one
can simply focus on one opinion at a time, collapsing the remaining opinions
into an ``other'' class.  Proceeding this way, it is easy to extend the results
we presented in Section~\ref{se:unbiased} to the general case: for $k > 2$
opinion, the fixation probability for opinion $i$ is $\frac{\sum_{u\in
W}d_u}{\sum_ud_u}$, where $W$ is the subset of nodes with opinion $i$ in the
initial configuration.  The biased case is considerably harder and the
technical barriers are twofold: one is the general difficulty of characterizing
the expected change of the global state in the biased setting even in the case
of 2 opinions (see next paragraph). The other is the possible presence of
rock-paper-scissors like dynamics that may arise depending on the distribution
of the opinion biases.

\paragraph{General topologies.} As also suggested by previous work, albeit for
different models~\cite{montanari2010spread, anagnostopoulos22biased, LGP22}, we
believe the biased case might give rise to diverse and possibly
counterintuitive behaviors. In general, a crucial technical challenge is
characterizing the evolution of the global state across consecutive steps,
since this in general depends on the current configuration in a way that is
highly topology-dependent and hard to analyize. Some recent
results~\cite{schoenebeck2018consensus, shimizu2020quasi} proposed techniques
relying on variants of the expander mixing lemma to investigate quasi-majority
dynamics on expanders. Unfortunately, these techniques do not obviously extend
to the biased voter models we consider. Indeed and interestingly, the class of
dynamics these techniques apply to does not even include the standard voter
model as a special case.

\medskip\noindent In general, we believe that extending and/or improving our
results for the biased setting might require refining important techniques,
such as those of~\cite{schoenebeck2018consensus, shimizu2020quasi} or the ones
discussed in~\cite{lengler2018drift}.

\section*{Acknowledgements}
\begin{itemize}
\item Partially supported by the ERC Advanced Grant 788893 AMDROMA
``Algorithmic and Mechanism Design Research in Online Markets'', the EC
H2020RIA project ``SoBigData++'' (871042), and the MIUR PRIN project ALGADIMAR
``Algorithms, Games, and Digital Markets.
\item Work of the second author was carried out in association with Istituto di
Analisi dei Sistemi ed Informatica, Consiglio Nazionale delle Ricerche, Italy.
\item Supported by the Austrian Science Fund (FWF): P 32863-N. 
\item This project has received funding from the European Research Council
(ERC) under the European Union's Horizon 2020 research and innovation programme
(grant agreement No 947702).
\item This project was partly supported also by Rome Technopole, PNRR grant M4-C2-Inv. 1.5 CUP B83C22002820006 to VB. In particular, conceptualization and manuscript editing were funded by Rome Technopole.

\end{itemize}

\bibliographystyle{plain} 
\bibliography{moran}

\newpage
\begin{center}
\begin{LARGE}
\textbf{Appendix}
\end{LARGE}
\end{center}

\appendix
\section{Unbiased asynchronous model: connections to lazy random 
walks}\label{se:async_lazy}
For every node $v \in V$, the expected state of $v$ at time $t+1$, conditioned
on $\bx^{(t)} = \bx$ is
\begin{align*}
	\Expec{}{x_v^{(t+1)} \;\big|\; \bx^{(t)}=\bx} 
	&=\left(1 - \frac{1}{n}\right )x_v + \frac{1}{n} \left((1 - 
	\alpha)x_v + \frac{\alpha}{d_v}\sum_{u \in N(v)} x_u\right) = 
	\left(1 - \frac{\alpha}{n}\right )x_v  + \frac{\alpha}{nd_v}\sum_{u 
	\in N(v)} x_u.
\end{align*}
In vector form:
\[
	\Expec{}{\bx^{(t)} \;\big|\; \bx^{(t-1)}=\bx} = \left(1 - 
	\frac{\alpha}{n}\right )\bx + \frac{\alpha}{n}P\bx,
\]
where $P = D^{-1}A$ is the transition matrix of the simple random walk on $G$
(with $D$ the diagonal degree matrix and $A$ the adjacency matrix of the graph)
and $I$ is the identity matrix. From this we obtain:
\[
	\Expec{}{\bx^{(t)}} = \left(1 - 
	\frac{\alpha}{n}\right )\bx^{(t-1)} + \frac{\alpha}{n}P\bx^{(t-1)} 
	= \hat{P}\bx^{(t-1)},
\] 
where $\hat{P} := \left(1 - \frac{\alpha}{n}\right)I + 
\frac{\alpha}{n}P$ is a row-stochastic matrix.\footnote{This trivially 
follows since the entries of each rows of $P$ sum to $1$.} We finally obtain
\[
	\Expec{}{\bx^{(t)}} = \hat{P}^t \bx^{(0)}.
\]
The matrix $\hat{P}$ corresponds to an ergodic Markov chain, whose left 
and right eigenvectors are the same of the transition matrix $P = D^{-1}A$ of the 
random walk on the underlying graph $G$. In particular, the main left 
eigenvector is the stationary distribution of the random walk on $G$, 
which corresponds to \eqref{eq:sood} in Section \ref{se:unbiased}.

\section{Proof of Proposition~\ref{prop:async-unbiased}}
\label{app:lazy-chains}

For completeness' sake in this section we give a full proof of
Proposition~\ref{prop:async-unbiased}. The proof makes use of some standard
relations between a Markov chain and its corresponding \textit{lazy} version.

\subsection*{Lazy Markov chains}
Let $P$ be a square row-stochastic matrix\footnote{A matrix whose entries are
in the interval $[0,1]$ and such that the entries of each row sum up to $1$},
let $\{X_t\}_t$ be a Markov chain with transition matrix $P$, let $\alpha \in
(0,1)$, and let $P_1 = (1-\alpha) I + \alpha P$, where $I$ is the identity
matrix with the same dimension of $P$. Observe that $P_1$ is itself a
stochastic matrix and that the Markov chain with transition matrix $P_1$
proceeds as follows: At each step, with probability $1-\alpha$ the chain stays
where it is, and with probability $\alpha$ it does one step according to
transition matrix $P$. We call the Markov chain $\{Y_t\}_t$ with transition
Matrix $P_1$ the \textit{lazy version} of Markov chain $\{X_t\}_t$ with
parameter $\alpha$.

Let $\{X_t\}_t$ and $\{Y_t\}_t$ be the Markov chains with transition matrices
$P$ and $P_1$, respectively. We can define the Markov chain $\{Y_t\}_t$ on the
same probability space of $\{X_t\}_t$ using an extra random source as follows:
Let $\{B_t\}_t$ be a sequence of independent and identically distributed
Bernoulli random variables such that $B_t = 0$ with probability $1-\alpha$ and
$B_t = 1$ with probability $\alpha$, let $\sigma_t = \sum_{i=1}^t B_t$, for
every $t = 1, 2, \dots$, and let $\{Y_t\}$ be defined as follows 
\begin{equation}\label{eq:ycoupled}
\left\{
\begin{array}{ccl}
Y_0 & = & X_0 \\
Y_{t} & = & X_{\sigma_t} \qquad \mbox{ for } t = 1, 2, \dots
\end{array}
\right.
\end{equation}
Observe that, from the coupling in~\eqref{eq:ycoupled} it follows that, if
chain $\{X_t\}$ visits the sequence of states $(x_0, x_1, x_2, \dots)$ then
also the chain $\{Y_t\}_t$ visits the same states, in the same order, remaining
on each state for $1/\alpha$ units of time in expectation. More formally, for a
state $x$ let $\tau^Y(x)$ be the first time the chain $\{Y_t\}_t$ hits state
$x$. From the construction of the coupling it follows that, for each sequence
of states $(x_0, x_1, \dots, x_t)$ such that $P(x_i, x_{i+1}) > 0$ for every $i
= 0, \dots t-1$, it holds that
\begin{equation}\label{eq:yconsensus}
\Expec{}{\tau^Y(x_t) \,|\, (X_0, X_1, \dots, X_t) = (x_0, x_1, \dots, x_t)} = t
/ \alpha
\end{equation}
Hence, if we have a theorem about the hitting time of some state (or set of
states) for the original chain $\{X_t\}_t$, it directly applies also to the
lazy chain, with a $1/\alpha$ multiplicative factor.

\subsection*{Proof of Proposition~\ref{prop:async-unbiased}}
Let $\Omega = \{0,1\}^n$ be the state space of all possible configurations for
a set of $n$ nodes where each node can be in state either $0$ or $1$. For an
arbitrary graph $G$, let $\{X_t\}_t$ be the Markov chain with state space
$\Omega$ describing the voter model on $G$ and let $\{Y_t\}_t$ be the Markov
chain with state space $\Omega$ describing our unbiased asynchronous variant.
Notice that, according to the definition of our unbiased asynchronous variant,
$\{Y_t\}_t$ is the lazy version of $\{X_t\}_t$ with parameter $\alpha$.  Using
the coupling in~\eqref{eq:ycoupled} it thus can be defined on the same
probability space of $\{X_t\}_t$ in a way that chain $\{X_t\}_t$ ends up with
all nodes in state $1$ if and only if chain $\{Y_t\}_t$ ends up with all nodes
in state $1$. Hence, the fixation probability in our unbiased asynchronous
model is equal to the fixation probability in the voter model. Moreover,
observe that the \textit{consensus} time is the hitting time of the set of two
states $\{(0, \dots, 0), (1, \dots, 1)\} \subseteq \{0,1\}^n$. Hence, according
to~\eqref{eq:yconsensus}, the expected consensus time of our unbiased
asynchronous model equals the expected consensus time of the voter model
multiplied by $1/\alpha$. \qed

\section{Computation of $T_k(n)$ in Theorem~\ref{thm:exact-clique}}
\label{app:inversion}

\begin{lemma}
\label{lem:B-inverse}
If $T(n) = B^{-1} \bone$, then for each $k=1,\ldots,n-1$, 
\[
T_k(n) = \frac{n-1}{\alpha} \left( (n-k) (H_{n-1} - H_{n-k}) + k (H_{n-1} - H_{k-1}) \right). 
\]
\end{lemma}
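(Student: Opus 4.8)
The plan is to solve the tridiagonal system $BT(n)=\bone$ directly, by the classical first-difference method for birth-and-death chains, exploiting the symmetry $p_k=q_k$. Write $c_k := p_k = q_k = \alpha\, k(n-k)/(n(n-1))$ and extend the solution by the absorbing boundary values $T_0(n)=T_n(n)=0$ (consistent with the shape of $B$: the missing off-diagonal terms in the first and last rows are exactly the coefficients of $T_0$ and $T_n$). For $1\le k\le n-1$ the $k$-th equation of $BT(n)=\bone$ can be rewritten as $c_k(T_k(n)-T_{k-1}(n)) - c_k(T_{k+1}(n)-T_k(n)) = 1$; setting $D_k := T_k(n)-T_{k-1}(n)$ this collapses to the first-order recurrence $D_{k+1}=D_k-1/c_k$, hence $D_m = D_1 - \sum_{j=1}^{m-1} 1/c_j$ for $1\le m\le n$.

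Next I would pin down the free constant $D_1$ using the right boundary condition. Summing $D_m$ over $m=1,\dots,n$ telescopes to $T_n(n)-T_0(n)=0$, and swapping the order of summation ($\sum_{m=1}^{n}\sum_{j=1}^{m-1} = \sum_{j=1}^{n-1}(n-j)$) gives $nD_1 = \sum_{j=1}^{n-1}(n-j)/c_j$. Summing the differences up to index $k$ and swapping order again yields the closed form $T_k(n) = kD_1 - \sum_{j=1}^{k-1}(k-j)/c_j$.

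The remaining work is evaluating these sums in terms of harmonic numbers, the key tool being the partial-fraction identity $1/c_j = \frac{n-1}{\alpha}\bigl(\tfrac1j + \tfrac1{n-j}\bigr)$. Substituting into $D_1$, the $\pm(n-1)$ integer contributions cancel and the two halves combine to $\frac{n-1}{\alpha}\,nH_{n-1}$, so $kD_1 = \frac{n-1}{\alpha}\,kH_{n-1}$. For the second sum, $\sum_{j=1}^{k-1}(k-j)/j = kH_{k-1}-(k-1)$ is elementary, while for $\sum_{j=1}^{k-1}(k-j)/(n-j)$ the substitution $i=n-j$ turns it into $\sum_{i=n-k+1}^{n-1}(1-(n-k)/i) = (k-1)-(n-k)(H_{n-1}-H_{n-k})$; combining and cancelling the $(k-1)$'s leaves $\sum_{j=1}^{k-1}(k-j)/c_j = \frac{n-1}{\alpha}\bigl(kH_{k-1} - (n-k)(H_{n-1}-H_{n-k})\bigr)$. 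Subtracting this from $kD_1$ produces exactly $T_k(n) = \frac{n-1}{\alpha}\bigl((n-k)(H_{n-1}-H_{n-k}) + k(H_{n-1}-H_{k-1})\bigr)$.

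There is no conceptual obstacle here; the one thing that needs care is the index bookkeeping in the double sums and the off-by-one in the harmonic numbers ($H_{k-1}$ versus $H_k$, the range $n-k+1$ to $n-1$, etc.), where a stray term or sign can slip in. A convenient sanity check while writing is the symmetry $T_k(n)=T_{n-k}(n)$, which the final formula must satisfy and which does hold since $kH_k-kH_{k-1}=1=(n-k)H_{n-k}-(n-k)H_{n-k-1}$.
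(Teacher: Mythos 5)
Your proof is correct, and it takes a genuinely different route from the paper's. You solve the tridiagonal system $BT(n)=\bone$ directly by the classical first-difference method: with $p_k=q_k=c_k$ the $k$-th equation collapses to $D_{k+1}=D_k-1/c_k$ for $D_k=T_k(n)-T_{k-1}(n)$, the boundary values $T_0(n)=T_n(n)=0$ (correctly read off from the missing off-diagonal entries of $B$) fix the constant $D_1$, and the partial-fraction identity $1/c_j=\tfrac{n-1}{\alpha}(\tfrac1j+\tfrac1{n-j})$ turns the resulting double sums into harmonic numbers; all the index bookkeeping in your evaluation checks out, and the symmetry check $T_k(n)=T_{n-k}(n)$ is a sound safeguard. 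The paper instead quotes Glaz's explicit closed-form inverse of the tridiagonal matrix, carries the general ratio $r=\alpha_{01}/\alpha_{10}$ through the simplification, and only at the end sets $r=1$. What the paper's choice buys is reusability: the intermediate expression with general $r$ is exactly what is needed again in the biased analysis of Theorem~\ref{thm:async-biased-clique-ect} (Appendix~\ref{sec:biased-async-appendix}), so one computation serves both settings. What your argument buys is self-containedness and transparency: it needs no external inversion formula and makes the role of the symmetric rates and the absorbing boundaries explicit, though as written it exploits $p_k=q_k$ and would need summation factors (i.e., essentially the standard biased birth-death machinery) to extend to $r\neq 1$.
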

\begin{proof}
After inverting the matrix $B$ as in~\cite[Eq.~(5)]{Glaz:1979}, 
\begin{equation}
\label{eq:glaz-formula}
    T_k(n) 
    = \frac{
        \sum_{s=1}^{n-1} \left(
            \sum_{\ell=1}^{m(s,k)} \prod_{i=1}^{\ell-1} q_i \prod_{j=\ell}^{s-1} p_j
        \right) \left(
            \sum_{\ell=1}^{M(s,k)} \prod_{i=s+1}^{n-\ell} q_i \prod_{j=n-\ell+1}^{n-1} p_j
        \right)
    }{
        \sum_{\ell=1}^{n} \prod_{i=1}^{n-\ell} q_i \prod_{j=n-\ell+1}^{n-1} p_j
    },
\end{equation}
where $m(s,k) := \min(s,k)$, and $M(s,k) := n - \max(s,k)$. % = n-s-\max(0,k-s)
We already observed that $p_i = r q_i$ for all $i=1,\ldots,n-1$, where $r =
\alpha_{01}/\alpha_{10}$ (we do not yet substitute $r=1$ because the following
computation will also be useful in the biased setting). Hence,
\begin{align}
    T_k(n) 
    &= \frac{
        \sum_{s=1}^{n-1} \left(
            \sum_{\ell=1}^{m(s,k)} \prod_{i=1}^{\ell-1} q_i \prod_{j=\ell}^{s-1} r q_j
        \right) \left(
            \sum_{\ell=1}^{M(s,k)} \prod_{i=s+1}^{n-\ell} q_i \prod_{j=n-\ell+1}^{n-1} r q_j
        \right)
    }{
        \sum_{\ell=1}^{n} \prod_{i=1}^{n-\ell} q_i \prod_{j=n-\ell+1}^{n-1} r q_j
    }
    \nonumber
    \\
    &= \frac{
        \sum_{s=1}^{n-1} \left(
            \sum_{\ell=1}^{m(s,k)} r^{s-\ell} \prod_{i=1}^{s-1} q_i
        \right) \left(
            \sum_{\ell=1}^{M(s,k)} r^{\ell-1} \prod_{i=s+1}^{n-1} q_i
        \right)
    }{
        \sum_{\ell=1}^{n} r^{\ell-1} \prod_{i=1}^{n-1} q_i
    }
    \nonumber
    \\
    &= \frac{
        \sum_{s=1}^{n-1} \left(
            \prod_{i=1}^{s-1} q_i \cdot \prod_{i=s+1}^{n-1} q_i 
        \right) \left(
            \sum_{\ell=1}^{m(s,k)} r^{s-\ell}
        \right) \left(
            \sum_{\ell=1}^{M(s,k)} r^{\ell-1}
        \right)
    }{
        \prod_{i=1}^{n-1} q_i \cdot \sum_{\ell=1}^{n} r^{\ell-1}
    }
    \nonumber
    \\
    &= \frac{
        \prod_{i=1}^{n-1} q_i \cdot \sum_{s=1}^{n-1} q_s^{-1} \left(
            \sum_{\ell=1}^{m(s,k)} r^{s-\ell}
        \right) \left(
            \sum_{\ell=1}^{M(s,k)} r^{\ell-1}
        \right)
    }{
        \prod_{i=1}^{n-1} q_i \cdot \sum_{\ell=1}^{n} r^{\ell-1}
    }
    \nonumber
    \\
    &= \frac{
        \sum_{s=1}^{n-1} q_s^{-1} \left(
            \sum_{\ell=1}^{m(s,k)} r^{s-\ell}
        \right) \left(
            \sum_{\ell=1}^{M(s,k)} r^{\ell-1}
        \right)
    }{
        \sum_{\ell=1}^{n} r^{\ell-1}
    }. 
    \label{eq:formula-orribile}
\end{align}
Using now the assumption of the unbiased setting, $r=
\alpha_{01}/\alpha_{10}=1$ and the two inner summations in the numerator
simplify to $m(s,k)$ and $M(s,k)$ respectively.  After substituting $q_s =
\alpha \frac{s(n-s)}{n(n-1)}$, 
\begin{align*}
    T_k(n)
    &= \frac{
        \sum_{s=1}^{n-1} \left( \frac{n(n-1)}{\alpha s(n-s)} \cdot m(s,k) \cdot M(s,k) \right)
    }{n}
    \\
    &= \frac{n-1}{\alpha} \left(
        \sum_{s=1}^{k-1} \frac{s(n-k)}{s(n-s)}
        + \sum_{s=k}^{n-1} \frac{k(n-s)}{s(n-s)}
    \right)
    \\
    &= \frac{n-1}{\alpha} \left(
        (n-k) \sum_{s=1}^{k-1} \frac{1}{n-s}
        + k \sum_{s=k}^{n-1} \frac{1}{s}
    \right)
    \\
    &= \frac{n-1}{\alpha} \left(
        (n-k) \sum_{s=n-k+1}^{n-1} \frac{1}{s}
        + k \sum_{s=k}^{n-1} \frac{1}{s}
    \right)
    \\
    &= \frac{n-1}{\alpha} \left(
        (n-k) (H_{n-1} - H_{n-k})
        + k (H_{n-1} - H_{k-1})
    \right),
\end{align*}
where $H_j = \sum_{i=1}^{j} i^{-1}$ is the $j$-th harmonic number.

\end{proof}

\section{Equivalence of $\model_1$ and $\model_2$}\label{sec:coupling_unbiased_sync}
\begin{lemma}
\label{prop:equiv-models}
Assume $\model_1$ and $\model_2$ are initialized with the same 
distribution of opinions at time $0$. Let $\bx\in\{0, 1\}^n$. Then, for every $t\ge 0$:
\[
	\Prob{\model_1}{\bx^{(t)} = \bx} = \Prob{\model_2}{\bx^{(t)} = \bx}.
\]
\end{lemma}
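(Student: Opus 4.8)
The plan is to prove the statement by induction on $t$, exhibiting an explicit coupling of the two chains under which their trajectories coincide almost surely. The base case $t=0$ is exactly the hypothesis that $\model_1$ and $\model_2$ are initialized with the same distribution of opinions, so we may couple the two initial configurations to be \emph{equal}. For the inductive step, since both $\model_1$ and $\model_2$ are Markov chains on $\{0,1\}^n$, it suffices to show they share the same one-step transition kernel; equivalently, that starting from a common configuration $\bx$ the two processes can be coupled so that $\bx^{(t+1)}$ is the same random variable in both.

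To build the coupling, I would work on a single probability space carrying, for every node $u$ and every step, mutually independent random variables $V_u \sim \mathrm{Unif}(N(u))$ and $\Theta_u \sim \mathrm{Unif}[0,1]$, independent across nodes. Run $\model_1$ in the natural way: node $u$ samples the neighbor $V_u$, then sets $x_u \leftarrow x_{V_u}$ iff $\Theta_u < \alpha$. Run $\model_2$ using the \emph{same} $V_u$ together with the Bernoulli variable $B_u := \mathbf{1}[\Theta_u < \alpha]$ (which is indeed $\mathrm{Bernoulli}(\alpha)$ and independent of $V_u$, since $\Theta_u$ is): if $B_u = 1$ node $u$ copies $x_{V_u}$, otherwise it keeps its opinion. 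Sampling the unused $V_u$ on the event $B_u = 0$ does not alter the law of $\model_2$, so this is a legitimate realization of $\model_2$. Under this coupling, if the two configurations agree at time $t$, then for every node $u$ the new opinion equals $x_{V_u}^{(t)}$ when $\Theta_u < \alpha$ and $x_u^{(t)}$ otherwise — the very same value in both models — so the configurations agree at time $t+1$. By induction the coupled processes are identical at every time, hence $\bx^{(t)}$ has the same distribution under $\model_1$ and $\model_2$ for all $t$, which is the claim.

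The argument is essentially bookkeeping; the only points requiring care are (i) the observation that within a synchronous round the $n$ node updates use disjoint, independent blocks of randomness, so they are conditionally independent given the current configuration and the per-node analysis above determines the whole transition kernel, and (ii) the remark that augmenting $\model_2$ with an extra, unused neighbor sample on the ``do nothing'' branch leaves its distribution unchanged, which is what makes it sound to drive both chains with the common variables $(V_u,\Theta_u)$. I expect (i) — stating the coupling cleanly for the full synchronous step rather than for a single node, and invoking the Markov property to pass from one-step agreement to agreement of all marginals — to be the main, and rather minor, obstacle.
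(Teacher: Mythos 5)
Your proof is correct, and it takes a slightly different route from the paper's. The paper's argument (Lemma~\ref{prop:equiv-models}) verifies equality of the two models at the level of transition probabilities: it computes, for each node $u$ and each configuration $\bx$, the conditional probability that $u$ holds a given opinion at the next round under $\model_1$ and under $\model_2$ (both come out to $1-\alpha\,d_{1-a}/d_u$, resp.\ $\alpha\,d_a/d_u$), then uses per-node independence within a round to equate the configuration-level kernels, and concludes by induction on $t$. You instead build a pathwise coupling: drive both chains with the same per-node, per-round variables $(V_u,\Theta_u)$, observe that the update rule of each model is a deterministic function of $(x^{(t)},V_u,\Theta_u)$ producing the same value, and conclude that the coupled trajectories coincide almost surely. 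Your mechanism buys a stronger conclusion with less computation — it yields equality of the whole processes (all finite-dimensional distributions), not just the time-$t$ marginals — and it makes literal the paper's informal remark that the coin toss and the neighbor choice commute; the paper's computation, on the other hand, exhibits the explicit formulas that are reused later in the analysis. One minor point you should make explicit: in $\model_1$ the acceptance threshold is $\alpha_{c,c'}$, which on the diagonal need not equal $\alpha$, so "copy iff $\Theta_u<\alpha$" is not literally the algorithm's accept/reject step when $x_u=x_{V_u}$; it is nonetheless a valid realization of the \emph{configuration} process, because copying an identical opinion is a no-op, and the lemma only concerns the configuration. With that remark added, the argument is complete.
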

\begin{proof}
Denote by $\bx^{(t)}$ the state of the system at time $t$ and assume $\bx^{(t)}
= \bx$, where $\bx\in\{0, 1\}^n$. In particular, suppose that $x_u = a\in\{0,
1\}$. Denote by $d_a$ and $d_{1-a}$ respectively the number of $u$'s neighbours
holding opinions $a$ and $1 - a$ at the end of iteration $t$. We begin by
showing that the probabilities of $u$ holding opinion $a$ in iteration $t+1$
are the same in the two models. We first consider the case $x_u = a$ (i.e., we
are interested in the probability that $u$ does not change opinion between
iterations $t$ and $t+1$). For $\model_1$ we have: 
\begin{align*}
	&\Prob{\model_1}{x_u^{(t+1)} = a\mid \bx^{(t)} = \bx} = 
	\frac{d_a}{d_u} + \frac{d_{1-a}}{d_u}(1 - \alpha) = 1 - 
	\alpha\frac{d_{1-a}}{d_u},
\end{align*}
where we used $d_a + d_{1-a} = d_u$. For $\model_2$ we have:
\begin{align*}
	&\Prob{\model_2}{x_u^{(t+1)} = a\mid \bx^{(t)} = \bx} = 
	(1 - \alpha) + \alpha\frac{d_a}{d_u} 
    = 1 - \alpha\left( 1 - \frac{d_{a}}{d_u} \right)\\
    &= 1 - \alpha\frac{d_{1-a}}{d_u},
\end{align*}
where the last equality follows since $d_u - d_a = d_{1-a}$.

We next consider the case $x_u = 1 - a$. We have:
\begin{align*}
	\Prob{\model_1}{x_u^{(t+1)} = a\mid \bx^{(t)} = \bx} &= \frac{d_{a}}{d_u}\alpha,
    \\
    \Prob{\model_2}{x_u^{(t+1)} = a\mid \bx^{(t)} = \bx} &= \alpha\frac{d_{a}}{d_u},
\end{align*}
We have thus shown that
\[
	\Prob{\model_1}{x_u^{(t)} = a \mid \bx^{(t-1)}=\bx} 
	= \Prob{\model_2}{x_u^{(t)} = a \mid \bx^{(t-1)}=\bx}.
\]
Moreover, since nodes' decisions are assumed \emph{independently} by 
each node  in every iteration both in $\model_1$ and $\model_2$, the above implies:
\begin{align*}
	&\Prob{\model_1}{\bx^{(t)} = \bx' \mid \bx^{(t-1)}=\bx} = \prod_{u\in 
	V}\Prob{\model_1}{x_u^{(t)} = x_u' \mid \bx^{(t-1)}=\bx},
\end{align*}
and the same of course holds for $\model_2$. As a consequence:
\begin{equation}\label{eq:equiv}
	\Prob{\model_1}{\bx^{(t)} = \bx'|\bx^{(t-1)}=\bx} = 
	\Prob{\model_2}{\bx^{(t)} = \bx' |\bx^{(t-1)}=\bx}
\end{equation}
We complete the proof by inductively showing that 
\begin{align*}
	&\Prob{\model_1}{\bx^{(t-1)} = \bx} = \Prob{\model_2}{\bx^{(t-1)} = 
	\bx}\\ 
	&\Longrightarrow \Prob{\model_1}{\bx^{(t)} = \bx} = \Prob{\model_2}{\bx^{(t)} = 
\bx}.
\end{align*}
To this purpose, note that our arguments above immediately imply that the claim
is true for $t = 1$, if $\model_1$ and $\model_2$ are initialized in the same
configuration $\bx^{(0)}$. For the inductive step we have:
\begin{align*}
	&\Prob{\model_1}{\bx^{(t)} = \bx}\\
	& = \sum_{\bz\in \{0, 1\}^V}\Prob{\model_1}{\bx^{(t)} = \bx \mid \bx^{(t-1)} = 
	\bz}\Prob{\model_1}{\bx^{(t-1)} = \bz}\\
	&= \sum_{\bz\in \{0, 1\}^V}\Prob{\model_2}{\bx^{(t)} = \bx \mid \bx^{(t-1)} = 
	\bz}\Prob{\model_2}{\bx^{(t-1)} = \bz},
\end{align*}
where the last equality follows from \eqref{eq:equiv} and from the 
inductive hypothesis. This concludes the proof.
\end{proof}

\section{Proof of Theorem~\ref{cl:consensus_1}}\label{apx:unbiased_sync_thm_proof}
\begin{proof}[Proof of part (i)]
Note that given the state vector $\mathbf x^{(t)}$, the probability that
$x^{(t+1)}_i=1$ can be expressed as follows: 

\[
\Prob{}{x^{(t+1)}_i = 1 \,|\, \mathbf x^{(t)}} =
(1 - \alpha) x_i^{(t)} + \alpha \frac{ \sum_{j \in V} a_{ij} x_j^{(t)} }{ d_i}, 
\]
since with probability $1-\alpha$ no opinion is copied to node $i$, and with
probability $\alpha$ an opinion is copied to node $i$ and the probability that
it is opinion 1 is exactly $\sum_j a_{ij} x_j^{(t)} / d_i$. 

Therefore, after defining $p_i(t) := \Prob{}{x_i^{(t)}=1 \,|\, \mathbf
x^{(t-1)}}$, we can write the model in the following form (called the
\emph{binary influence model} in \cite{asavathiratham2000influence}): 
\begin{align*}
\mathbf p(t+1) &= P \mathbf x^{(t)} \\
\mathbf x(t+1) &= \mathcal{B}(\mathbf p(t+1))
\end{align*}
where $P$ is a row-stochastic matrix, namely 
\[
P_{ij} = (1 - \alpha) \delta_{ij} + \alpha \frac{a_{ij}}{d_i}
\]
where $\delta_{ij}=1$ if $i=j$ and $\delta_{ij}=0$ otherwise, 
and $\mathcal{B}(\mathbf p(t))$ stands for a vector of $n$ Bernoulli random variables, 
respectively with parameters $p_1(t),\ldots,p_n(t)$. 
Observe that since each $x_i^{(t)}$ is binary, 
$
\mathbf p(t) = \Expec{}{\mathbf x^{(t)} \,|\, \mathbf x^{(t-1)}}. 
$
Thus, $\Expec{}{\mathbf x^{(t+1)} \,|\, \mathbf x^{(t)}} = \mathbf 
p^{(t+1)} = P \mathbf x^{(t)}$. 
Proceeding inductively we obtain 
\begin{align*}
&\Expec{}{\mathbf x^{(t)} \,|\, \mathbf x^{(0)}} = 
\Expec{}{ \Expec{}{ \mathbf x^{(t)} \,|\, \mathbf x^{(t-1)} } \,|\, \mathbf x^{(0)} } 
\\
&= \Expec{}{ P \mathbf x^{(t-1)} \,|\, \mathbf x^{(0)} } 
= P \Expec{}{ \mathbf x^{(t-1)} \,|\, \mathbf x^{(0)} }
= P^t \mathbf x^{(0)}.
\end{align*}
Since the operator $P$ is ergodic (being identical to the transition matrix of a lazy random walk on a connected graph), 
\[
\lim_{t \to \infty} \Expec{}{ \mathbf x^{(t)} \,|\, \mathbf x^{(0)} } = 
\lim_{t \to \infty} P^t \mathbf x^{(0)} = \mathbf 1 \mathbf \pi \mathbf x^{(0)},
\]
where $\mathbf \pi$ is the left dominant eigenvector of $P$, normalized so that
$\mathbf \pi \mathbf 1 = 1$. Note that $\pi$ represents the unique stationary
distribution associated to $P$. In particular, since $P$ is the transition
matrix of a lazy random walk on the graph, $\pi_i = {d_i}/{\sum_{j \in V} d_j}$
and since by assumption $x^{(0)}_i = 1$ if $i \in W$ and 0 otherwise, we get 
\[
\lim_{t \to \infty} \Expec{}{ \mathbf x^{(t)} \,|\, \mathbf x^{(0)} } 
= \mathbf 1 \mathbf \pi \mathbf x^{(0)} 
= \mathbf 1 \frac{\sum_{u \in W} d_u}{\sum_{u \in V} d_u}. 
\]
Finally, recalling that $\Expec{}{ x_u^{(t)} \,|\, \mathbf x^{(0)} }
	= \Prob{}{x_u^{(t)}=1 \,|\, \mathbf x^{(0)}}$ since the $x_u^{(t)}$ 
	are binary random variables, we get $\phi = (\sum_{u \in W} 
	d_u)/(\sum_{u \in V} d_u)$.
\end{proof}

\begin{proof}[Proof of part (ii)]
We can achieve a bound on the expected consensus time by leveraging results
that appear in a number of papers and are condensed in~\cite[Section
14.3.2]{aldous2002reversible} for the case of continuous-time random walks,
while literature on the synchronous, discrete setting is sparser. For this
reason, in the remainder we retrace the main points of the proof, adapting it
to the discrete-time, synchronous case.  To this purpose, we need some
additional notation.  Following~\cite{aldous2002reversible}
and~\cite{Levin:2009}, considered a node/state $v$ of a (henceforth, ergodic
and reversible) Markov chain, we denote by $\thit_v$ the \emph{hitting time} of
$v$, i.e., the number of steps till the chain reaches $v$ for the first time.
We denote by $\Expec{u}{\thit_v}$ the expected hitting time of $v$ when the
random walk starts at node $u$. Recall that the hitting time for a given graph
is the maximum of $\Expec{u}{\thit_v}$ over all possible choices of $u$ and
$v$, i.e., $\thit := \max_{u,v}\Expec{u}{\thit_v}$. Considered some probability
distribution $\mathcal{D}$ over the states of the Markov chain, with a slight
abuse of notation, we denote by $\Expec{\mathcal{D}}{\thit_v}$ the expected
hitting time of $v$ when the random walk starts at $u\sim\mathcal{D}$.
Moreover, we denote by $M_{uv}$ the \emph{meeting time} of two independent
copies of the Markov chain started at $u$ and $v$ respectively. Finally, the
correspondence between our voter-like model and lazy random walks allows us to
conclude that the consensus time $\tcons$ follows the same distribution as the
\emph{coalescing time} of the corresponding, lazy random walk (for the reasons
behind this fact, refer to~\cite[Section 14.3]{aldous2002reversible}
or~\cite{hassin01voter}).  This time is defined with respect to a coalescing
(possibly lazy) random walk over $G$: at time $0$, we start $n$ independent
random walks, one per node/state in $G$. Assume there are $x$ surviving walks
at the end of step $t$; then, in step $t+1$, each of the $x$ walks moves to a
random neighbor, independently of the others; if two (or more) walks move to
the same node in step $t+1$, they stick together thereafter, moving as a single
one. Time $\tcons$ is defined as the first step in which there is only one
surviving walk. In light of the above considerations, in the remainder of this
proof, we use both the terms ``consensus time'' and ``coalescing time'' to
refer to $\tcons$.

\begin{lemma}\label{le:TvsM}
For a discrete, ergodic and reversible Markov chain, for every $u, v\in V$ we have:
\begin{equation*}\label{eq:TvsM}
	\max_{u,v}\Expec{}{M_{uv}}\le\max_{u,v}\Expec{u}{\thit_v}.
\end{equation*}
\end{lemma}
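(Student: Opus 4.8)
The plan is to fix $u,v \in V$ and exhibit a submartingale (or, dually, a supermartingale that decreases in expectation) built from hitting times, which lets us compare the meeting time $M_{uv}$ with the maximum hitting time $\thit = \max_{u,v}\Expec{u}{\thit_v}$. Concretely, I would run two independent copies of the lazy chain, $(U_t)_{t\ge 0}$ started at $u$ and $(V_t)_{t\ge 0}$ started at $v$, and consider the process $Z_t := \Expec{U_t}{\thit_{V_t}}$ — that is, the expected time for a fresh walk started at the current position of the first copy to hit the current position of the second copy, treating $V_t$ momentarily as frozen. The intuition, following \cite[Section 14.3.2]{aldous2002reversible}, is that as the two walkers move, this "residual hitting time" functional is a supermartingale up to the meeting time: each step the first walker makes genuine progress toward $V_t$ in the hitting-time potential, and by reversibility the motion of the second walker does not hurt in expectation. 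If I can show $\Expec{}{Z_{t+1} \mid \mathcal{F}_t} \le Z_t - 1$ for all $t < M_{uv}$, then optional stopping gives $\Expec{}{M_{uv}} \le Z_0 = \Expec{u}{\thit_v} \le \thit$, and taking the max over $u,v$ finishes the lemma.

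The key steps, in order, are: \textsc{(1)} set up the coupled probability space with the two independent copies and the natural filtration $\mathcal{F}_t = \sigma(U_0,\ldots,U_t,V_0,\ldots,V_t)$; \textsc{(2)} write the one-step change $\Expec{}{Z_{t+1}-Z_t \mid \mathcal{F}_t}$ by conditioning separately on the move of each copy, using the first-step decomposition $\Expec{x}{\thit_y} = 1 + \sum_z P(x,z)\Expec{z}{\thit_y}$ for $x\ne y$ to handle the $U$-move, which contributes exactly $-1$; \textsc{(3)} invoke reversibility of the lazy chain to argue that the $V$-move contributes at most $0$ in expectation to $Z_t$ — this is the crucial symmetry input, and it is essentially the statement that $\Expec{}{\thit_{V_{t+1}}}$ averaged over the step from $V_t$ does not increase the residual potential, which for reversible chains follows from the harmonicity/identity $\sum_z \pi(z)\big(\Expec{z}{\thit_y} + \Expec{y}{\thit_z}\big)$ being independent of $y$ (Aldous--Fill); \textsc{(4)} conclude that $(Z_t + t)_{t \le M_{uv}}$ is a supermartingale, verify the integrability and uniform-boundedness conditions needed for optional stopping at the (a.s.\ finite, by ergodicity) stopping time $M_{uv}$, and read off $\Expec{}{M_{uv}} \le Z_0$.

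The main obstacle I anticipate is step \textsc{(3)}: making the reversibility argument rigorous in the \emph{discrete, synchronous} setting rather than the continuous-time setting where it is classically stated. In continuous time the two walkers never move simultaneously, so one can cleanly separate the two contributions; in discrete time both $U_t$ and $V_t$ move at every step, and one must be careful that the simultaneous move does not introduce a positive cross term and, more subtly, that the walkers cannot "jump past each other" without meeting — this is exactly why the laziness ($\alpha$-lazy walk) is helpful, since it dampens the probability of a bad simultaneous transition, but the clean inequality should still require only reversibility, not laziness, via the cyclic hitting-time identities. A secondary technical point is the degenerate case where $V_t$ moves onto $U_t$'s previous location or vice versa; I would absorb these into the definition of $M_{uv}$ (meeting means $U_t = V_t$ at a common time) and check the submartingale inequality only strictly before that time. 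Once the one-step inequality is established, the optional stopping / Wald-type conclusion is routine.
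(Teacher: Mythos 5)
The crux of your plan is step \textsc{(3)}, and it fails as stated: the uncorrected potential $Z_t=\Expec{U_t}{\thit_{V_t}}$ is \emph{not} a supermartingale with drift $-1$, because the move of the target walker can increase the expected hitting time by far more than the unit gained from the chaser's move. Take the lazy walk on a star with centre $c$ and $n-1$ leaves, and a pre-meeting pair $(U_t,V_t)=(\ell,c)$ with $\ell$ a leaf: then $Z_t=\Expec{\ell}{\thit_c}=O(1)$, but with probability bounded away from zero the two walkers sit on two distinct leaves at time $t+1$, in which case $Z_{t+1}=\Theta(n)$; hence $\Expec{}{Z_{t+1}-Z_t\mid\mathcal{F}_t}=+\Theta(n)$, not $\le -1$, and no optional-stopping argument can be run on $Z_t$. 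Reversibility does not rescue this, and the identity you invoke (that $\sum_z\pi(z)\left(\Expec{z}{\thit_y}+\Expec{y}{\thit_z}\right)$ is independent of $y$) is itself incorrect: by the random target lemma $\sum_z\pi(z)\Expec{y}{\thit_z}$ is constant in $y$, so your expression equals a constant plus $\Expec{\pi}{\thit_y}$, which genuinely depends on $y$ (on the star it is $\Theta(n)$ for a leaf and $O(1)$ for the centre). It is also telling that your route would deliver the pointwise bound $\Expec{}{M_{uv}}\le\Expec{u}{\thit_v}$, which is stronger than the lemma and not what the known argument yields.

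The repair is precisely the stationary correction, and this is what the paper does: set $f(x,y)=\Expec{x}{\thit_y}-\Expec{\pi}{\thit_y}$. For a reversible chain $f$ is symmetric in $(x,y)$ (cyclic tour property), so the first-step recurrence $f(x,y)=1+\sum_z P(x,z)f(z,y)$ for $x\ne y$ holds in \emph{both} coordinates, and $S_t=2t+f(X_t,Y_t)$, frozen after the meeting time, is an exact martingale even though the two walkers move simultaneously: each move contributes $-1$, the correction term being exactly what cancels the ``hard target'' effect that kills $Z_t$. Optional stopping then gives $2\Expec{}{M_{uv}}-\Expec{}{\Expec{\pi}{\thit_{Y_{M_{uv}}}}}=\Expec{u}{\thit_v}-\Expec{\pi}{\thit_v}$, and bounding both $\Expec{\pi}{\thit_{\cdot}}$ and $\Expec{u}{\thit_v}$ by $\max_{i,j}\Expec{i}{\thit_j}$ yields $2\Expec{}{M_{uv}}\le 2\thit$, which is the lemma. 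Your worries about laziness and about walkers ``crossing'' are immaterial here: only the event $X_t=Y_t$ at a common time enters the argument, which needs reversibility and ergodicity, not laziness. If you replace $Z_t$ by $2t+f(X_t,Y_t)$, your outline becomes essentially the paper's proof (the discrete-time adaptation of the Aldous--Fill argument).
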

\begin{proof}
We follow the very same lines as the proof
of~\cite[Proposition~14.5]{aldous2002reversible}, which is given for continuous
Markov chains. First of all, for $u,v \in V$, let $X_t$ and $Y_t$ be the
chains at time $t$ respectively started at $u$ and $v$.  For $x,y \in V$, we
define the following function:
\[
	f(x, y) = \Expec{x}{T_y} - \Expec{\pi}{T_y}.
\]
Next, we define the following random variable:
\begin{align}\label{eq:martingale}
    S_t = \begin{cases}
        2t + f(X_t, Y_t) & \text{if } 0\le t\le M_{uv},
        \\
        S_{t-1} & \text{if } t > M_{uv}.
    \end{cases}
\end{align}
\paragraph{The $S_t$'s form a martingale.}
It should be noted that here, both the length of the sequence (i.e., $M_{uv}$)
and the values of the $S_t$'s depend on the randomness of two walks started at
$u$ and $v$ respectively.  To prove this first result, we proceed in a way
similar to the proof of \cite[Proposition 3.3]{aldous2002reversible}.  Clearly,
we have $\Expec{}{S_t \mid S_{1},\ldots , S_{t-1}} = S_{t-1}$, whenever $t >
M_{uv}$. Next, assume $(X_i, Y_i) = (x_i, y_i)$, for $i = 1,\ldots, t-1$, with
$x_i\ne y_i$ for every $i$.  We first note that it is enough to prove that 
\begin{align*}
	\Expec{}{S_t \mid \bigcap_{i=1}^{t-1} (X_i, Y_i) = (x_i, y_i)} &= 
    2t + \Expec{}{f(X_t, Y_t) \mid \bigcap_{i=1}^{t-1} (X_i, Y_i) = (x_i, y_i)}\\
	& = 2t + \Expec{}{f(X_t, Y_t) \mid (X_{t-1}, Y_{t-1}) = (x_{t-1}, y_{t-1})}
\end{align*}
where the second equality follows from the Markov property, which 
clearly also applies to $f(X_t, Y_t)$. We therefore have, if $P$ is the 
transition matrix of a reversible Markov chain:
\begin{align*}
	\Expec{}{S_t | S_{1},\ldots , S_{t-1}} 
    &= 2t + \sum_{x,y}P(x_{t-1}, x) P(y_{t-1}, y)f(x, y) 
    = 2t + \sum_xP(x_{t-1}, x)\sum_yP(y_{t-1}, y)f(x, y)
    \\
	&\stackrel{(a)}{=} 2t + \sum_xP(x_{t-1}, x)(f(x, y_{t-1}) - 1) 
    = 2t + \sum_xP(x_{t-1}, x)f(x, y_{t-1}) - 1 
    \\
    &\stackrel{(a)}{=} 2t + f(x_{t-1}, y_{t-1}) - 2 
	= S_{t-1},
\end{align*}
where $(a)$ follows since, for $x\ne y$, by the one-step recurrence of $f(x, y)$ 
we get $f(x, y) = 1 + \sum_zP(x, z)f(z, y)$ 
(see \cite[Proposition 3.3]{aldous2002reversible}).

\paragraph{Expectation of meeting time.}
Consider any two states $u$ and $v$. We have from the definition of 
$S_t$:
\[
	S_{M_{uv}} = 2M_{uv} + f(X_{M_{uv}}, Y_{M_{uv}}), 
\]
hence:
\begin{align*}
	&\Expec{}{S_{M_{uv}}} = 2\Expec{}{M_{uv}} + \Expec{(X_t, 
	Y_t)}{\Expec{X_{M_{uv}}}{T_{Y_{M_{uv}}}} - 
	\Expec{\pi}{T_{Y_{M_{uv}}}}} = 2\Expec{}{M_{uv}} - \Expec{(X_t, 
	Y_t)}{\Expec{\pi}{T_{Y_{M_{uv}}}}},
\end{align*}
where the second equality follows since, for every realization of 
$(X_t, Y_t)$, we have $X_{M_{uv}}\equiv Y_{M_{uv}}$ by definition of 
$M_{uv}$. Next,
\begin{align*}
	&\Expec{(X_t, Y_t)}{\Expec{\pi}{T_{Y_{M_{uv}}}}} = 
	\sum_y\Expec{\pi}{T_y}\Prob{}{\text{$X_t$ and $Y_t$ meet at 
	$y$}}\le\max_{ij}\Expec{i}{T_j}, 
\end{align*}
which implies:
\begin{equation}\label{eq:meet_low}
	\Expec{}{S_{M_{uv}}}\ge 2\Expec{}{M_{uv}} - 
	\max_{ij}\Expec{i}{T_j}.
\end{equation}
On the other hand, since $S_t$ is a martingale, we can apply the 
optional stopping theorem, whence:
\begin{equation}\label{eq:meet_up}
	\Expec{}{S_{M_{uv}}} = \Expec{}{S_0} = \Expec{}{f(u, v)} = 
	\Expec{u}{T_v} - \Expec{\pi}{T_v}
\end{equation}
Putting together \eqref{eq:meet_low} and \eqref{eq:meet_up} and
noting that $\Expec{u}{T_v}\le\max_{ij}\Expec{i}{T_j}$, 
we conclude that:
\[
	2\Expec{}{M_{uv}}\le\Expec{}{S_{M_{uv}}} + 
	\max_{ij}\Expec{i}{T_j} = \Expec{u}{T_v} - \Expec{\pi}{T_v} + 
	\max_{ij}\Expec{i}{T_j}\le 2\max_{ij}\Expec{i}{T_j},
\]
whence the thesis of Lemma~\ref{le:TvsM}.
\end{proof}

Using the aforementioned, well-known equivalence between (lazy) coalescing
random walks and the voter model (see~\cite{hassin01voter} or~\cite[Section
14.3]{aldous2002reversible}), we next provide an upper bound on
$\Expec{}{\tcons}$, which also provides a bound on the convergence time of the
voter model. Note that this result is only provided for the sake of
completeness and holds for any reversible Markov chain. 

First of all, we fix an arbitrary ordering of the nodes of $G$, let it be
$u_1,\ldots, u_n$, and we label the random walks accordingly. We then consider
the following, equivalent formulation of the coalescing random walk process:
whenever two or more random walks meet at step $t$, they coalesce and follow
thereafter the future path of the lower-labeled random walk.  Note that we
deterministically have $\tcons \ge\max_{v}M_{u_1v}$, possibly with 
strict inequality.

Next, let $m = \max_{u,v}\Expec{}{M_{u, v}}$ and note that $m\le \thit$ from 
Lemma \ref{le:TvsM}. 
We have:
\begin{align*}
	\Expec{}{\tcons}&=\sum_{t=0}^\infty\Prob{}{\tcons \ge t}
	\le\sum_{t=0}^\infty\Prob{}{\max_{v}M_{u_1v}\ge t}
	\\
	&\le\sum_{t=0}^\infty\min\left\{1, \sum_{v\ne u_1}\Prob{}{M_{u_1v}\ge t}\right\}.
\end{align*}
On the other hand, by proceeding as in \cite[Section 
2.4.3]{aldous2002reversible}) to derive (2.20) we obtain, for 
every $v \in V$:
$
	\Prob{}{M_{u_1v}\ge	t}\le e^{-\lfloor\frac{t}{m}\rfloor},
$
whence:
\begin{align*}
	\Expec{}{\tcons}
	&\le\sum_{t=0}^\infty\min\left\{1, ne^{-\lfloor\frac{t}{m}\rfloor}\right\} 
	\le \sum_{t=0}^\infty\min\left\{1, ne^{-\frac{t-1}{m}}\right\}
	\\
	&< 1+ \int_{0}^\infty\min\left\{1, ne^{-\frac{t-1}{m}}\right\} dt
	\\
	&\stackrel{(a)}{=} 1+ \int_{0}^{1+m\ln n} 1 dt + n\int_{1+m\ln n}^{\infty} e^{-\frac{t-1}{m}} dt
	\\
	&= 2 + m\ln n + m 
	\le (\ln n + 3)\thit,
\end{align*}
where $(a)$ follows from the fact that $ne^{-\frac{t-1}{m}}<1$ for $t>1+m \ln n$.

\paragraph{Remark for the case $\alpha\le 1/2$.}
Note that whenever $\alpha\le 1/2$, it is possible to get a stronger bound
on the expected consensus time. In the remainder, we write the transition matrix
of the process in compact form as:
$
	P = (1 - \alpha)I + \alpha Q,
$
where $Q$ is the transition matrix associated to a random walk on $G$, 
namely, $Q_{ij} = \frac{a_{ij}}{d_i}$. We denote by $\lambda_1 
\ge\lambda_2\ge\cdots\ge\lambda_n$ the eigenvalues of $Q$.
Being $G$ undirected, $Q$ and $P$ define reversible Markov chains. 
Moreover, if $\bw$ is a right eigenvector of $Q$ with eigenvalue 
$\lambda$:
\[
	P\bw = (1 - \alpha)\bw + \alpha Q\bw = (1 - \alpha + 
	\alpha\lambda)\bw,
\]
i.e., $\bw$ is also a right eigenvector of $P$ with eigenvalue 
$\lambda' = 1 - \alpha + \alpha\lambda$. In particular, we have 
$\lambda'\ge 0$, whenever
$
	\alpha\le {1}/({1 - \lambda}).
$
Since $Q$ is stochastic, we have $\lambda_i \in [-1,1]$ for every $i$.
Therefore, whenever $\alpha\le\frac{1}{2}$, we also have $\lambda_i'\ge 0$, 
for every $i$, namely $P$ is positive semidefinite (PSD). In this case, we can
apply a number of recent results for lazy random walks \cite{cooper_coalescing_2013,kanade_coalescence_2019,oliveira2019random}.
In particular, \cite[Theorem 6]{oliveira2019random}
immediately implies that the expected time of convergence to consensus, 
which corresponds to the coalescing time of $n$ coalescing random walks 
on $G$ satisfies:
\[
	\Expec{}{\tcons}\le c \, \thit,
\]
where $c$ is a universal constant and $\thit$ is the largest expected 
hitting time on $G$.
\end{proof}

\section{Biased fixation probability on regular graphs}
\label{sec:async-fixation-reg}
The analysis from the subsection \ref{se:biased} can be extended from the
clique to any $\Delta$-regular graph as follows \cite{diaz2016absorption}. Let
$S$ be the set of nodes with opinion 1 at any given step. The transition
probabilities of the birth-and-death chain on $\{0,1,\ldots,n\}$ will now
generally depend on $S$, and not just on the size $k=|S|$. For example, if
$p_S$ is the probability of transitioning from the set $S$ of nodes with
opinion 1 to a set of size of $|S|+1$, 

\begin{align*}
p_S &= \sum_{u \notin S} \Prob{}{u} \Prob{}{v \in S \,|\, u} \alpha_{01} \\
&= \sum_{u \notin S} \frac{1}{n} \frac{|N(u) \cap S|}{\Delta} \alpha_{01} \\
&= \frac{\alpha_{01}}{n \Delta} |\partial S|, 
\end{align*}
where $N(u)$ is the set of nodes adjacent to $u$ and $\partial S$ is the cut
through $S$. 

Similarly, if $q_S$ is the probability of transitioning from the set $S$  of
nodes with opinion 1 to a set of size $|S|-1$, 
\begin{align*}
q_S &= \frac{\alpha_{10}}{n \Delta} |\partial \bar{S}| = \frac{\alpha_{10}}{n \Delta} |\partial S|. 
\end{align*}
The key observation is that the ratio $p_S/q_S = \alpha_{01}/\alpha_{10} = r$
equals the fitness and does not depend on the set $S$ at all. Therefore, the
process can still be cast as a birth-and-death process if we ignore all steps
in which the number of nodes with opinion $1$ does not change: the presence of
such steps is irrelevant for the fixation probability. Therefore, by the same
analysis of the previous section, we obtain a fixation probability of $(1-
r^{-k})/(1- r^{-n})$ when starting from a configuration with $k$ nodes holding
opinion $1$.

\section{Proof of Theorem \ref{thm:async-biased-clique-ect}}
\label{sec:biased-async-appendix}
Define the matrix $B$ as in the proof of Theorem~\ref{thm:exact-clique}; note
that in this setting $p_i = r q_i$ for all $i=1,\ldots,n-1$, where $r =
\alpha_{01}/\alpha_{10}$.  Recall the formulation for the expected consensus
time starting from $k$ nodes with opinion 1, namely the $k$-th entry of the
vector $T(n)=B^{-1}\mathbf{1}$, which by proceeding as in the proof of
Lemma~\ref{lem:B-inverse} can be computed as 
\begin{align*}
    T_k(n) 
    &= \frac{
        \sum_{s=1}^{n-1} q_s^{-1} \left(
            \sum_{\ell=1}^{m(s,k)} r^{s-\ell}
        \right) \left(
            \sum_{\ell=1}^{M(s,k)} r^{\ell-1}
        \right)
    }{
        \sum_{\ell=1}^{n} r^{\ell-1}
    }, 
\end{align*}
with $m(s,k) := \min(s,k)$, and $M(s,k) := n - \max(s,k)$. 

The above expression for $T_k(n)$, although exact, is very hard to understand
asymptotically, therefore we proceed to bound it asymptotically with simpler
expressions. For the remainder of the proof we assume $r < 1$
(i.e.~$\alpha_{01}<\alpha_{10}$); this is without loss of generality, up to a
relabeling of the opinions. 

Using a difference of geometric sums, we can write
\[
    \sum_{\ell=1}^{\min(s,k)} r^{s-\ell}
    = \sum_{\ell=0}^{s-1} r^{\ell} - \sum_{\ell=0}^{\max(-1,s-k-1)} r^{\ell}
    = \frac{1-r^s}{1-r} - \frac{1-r^{\max(0,s-k)}}{1-r}
    = \frac{r^{\max(0,s-k)}-r^s}{1-r}.
\]
Therefore, substituting $q_s = \alpha_{10} \frac{s(n-s)}{n(n-1)}$ and computing the other geometric sums, 
\begin{align*}
    T_k(n) 
    &= \frac{
        \sum_{s=1}^{n-1} \frac{n(n-1)}{\alpha_{10}s(n-s)} 
        \left( \frac{r^{\max(0,s-k)}-r^{s}}{1-r} \right) \left( \frac{1-r^{M(s,k)}}{1-r} \right)
    }{
        \frac{1-r^{n}}{1-r}
    }
    \\
    &= \frac{n(n-1)}{\alpha_{10}(1-r)(1-r^{n})} 
        \sum_{s=1}^{n-1} \frac{(r^{\max(0,s-k)}-r^{s})(1-r^{M(s,k)})}{s(n-s)}.
\end{align*}
Note that the $\max(\cdot)$ in the exponents (also hidden in $M(s,k) = n-\max(s,k)$) 
can be removed by splitting the sum into two parts at $s=k$. In particular, we also have
\[
    T_k(n) = \frac{n(n-1)}{\alpha_{10}(1-r)(1-r^{n})} 
    \left(
        \sum_{s=1}^{k-1} \frac{(1-r^{s})(1-r^{n-k})}{s(n-s)}
        + \sum_{s=k}^{n-1} \frac{(r^{s-k}-r^{s})(1-r^{n-s})}{s(n-s)}
    \right).
\]
Note that $T_k(n) = \frac{n(n-1)}{\alpha_{10}(1-r)(1-r^n)} (S_1-S_2)$, where
\begin{align*}
    S_1 &= (1-r^{n-k})\left[
        \sum_{s=1}^{k-1} \frac{1}{s(n-s)}
        - \sum_{s=1}^{k-1} \frac{r^{s}}{s(n-s)}
    \right]
    \\
    S_2 &= (r^{-k}-1) \left[
        r^n\sum_{s=k}^{n-1} \frac{1}{s(n-s)}
        - \sum_{s=k}^{n-1} \frac{r^{s}}{s(n-s)}
    \right]
\end{align*}
For the first term $S_1$ above:
\begin{align*}
    &(1-r^{n-k})\left[
        \sum_{s=1}^{k-1} \frac{1}{s(n-s)}
        - \sum_{s=1}^{k-1} \frac{r^{s}}{s(n-s)}
    \right]
    = \frac{1-r^{n-k}}{n}\left[
        H_{n-1}+(H_{k-1}-H_{n-k})
        - n\sum_{s=1}^{k-1} \frac{r^{s}}{s(n-s)}
    \right],
\end{align*}
where we used 
\begin{flalign*}
    \sum_{s=1}^{k-1}\frac{1}{s(n - s)} 
    &= \frac{1}{n}\sum_{s=1}^{k-1}\left(\frac{1}{s} + \frac{1}{n-s}\right)
    = \frac{1}{n}\left(\sum_{s=1}^{k-1}\frac{1}{s} + \sum_{s=n-k+1}^{n-1}\frac{1}{s}\right) &&
    \\
    &= \frac{1}{n}[H_{n-1}+(H_{k-1}-H_{n-k})]. 
\end{flalign*}
For the second term $S_2$ above:
\begin{align*}
    &(r^{-k}-1) \left[
        r^n\sum_{s=k}^{n-1} \frac{1}{s(n-s)}
        - \sum_{s=k}^{n-1} \frac{r^{s}}{s(n-s)}
    \right]
    = \frac{r^{-k}-1}{n} \left[
        r^n[H_{n-1}-(H_{k-1}-H_{n-k})]
        - n\sum_{s=k}^{n-1} \frac{r^{s}}{s(n-s)}
    \right],
\end{align*}
where we used 
\begin{flalign*}
    \sum_{s=k}^{n-1}\frac{1}{s(n - s)}
    &= \frac{1}{n}\sum_{s=k}^{n-1}\left(\frac{1}{s} + \frac{1}{n-s}\right)
    = \frac{1}{n}\left[
        \sum_{s=1}^{n-1}\left(\frac{1}{s} + \frac{1}{n-s}\right)
        - \sum_{s=1}^{k-1}\left(\frac{1}{s} + \frac{1}{n-s}\right)
    \right] &&
    \\
    &= \frac{1}{n}[2H_{n-1}-H_{n-1}-(H_{k-1}-H_{n-k})] 
    \\
    &= \frac{1}{n}[H_{n-1}-(H_{k-1}-H_{n-k})]
\end{flalign*}
Putting the two equalities for $S_1$ and $S_2$ together,
\begin{small}
\begin{align*}
    n(S_1 - S_2) 
    &= (1-r^{n-k})\left[
        H_{n-1}+(H_{k-1}-H_{n-k} )
        - n\sum_{s=1}^{k-1} \frac{r^{s}}{s(n-s)}
    \right] 
    +\\&\quad
    - (r^{-k}-1)\left[
        r^n(H_{n-1}-(H_{k-1}-H_{n-k}))
        - n\sum_{s=k}^{n-1} \frac{r^{s}}{s(n-s)}
    \right] 
    \\
    &= (1-2r^{n-k}+r^n)H_{n-1} + (1-r^n)(H_{k-1}-H_{n-k})
    +\\&\quad
    - (1-r^{n-k}) n\sum_{s=1}^{k-1} \frac{r^{s}}{s(n-s)}
    + (r^{-k}-1) n\sum_{s=k}^{n-1} \frac{r^{s}}{s(n-s)}
    \\
    &= (1-2r^{n-k}+r^n)H_{n-1} + (1-r^n)(H_{k-1}-H_{n-k}) 
    +\\&\quad
    - (1-r^{n-k}) \sum_{s=1}^{k-1} \left[ \frac{r^{s}}{s} + \frac{r^{s}}{n-s} \right]
    + (r^{-k}-1) \sum_{s=k}^{n-1} \left[ \frac{r^{s}}{s} + \frac{r^{s}}{n-s} \right]
    \\
    &= (1-2r^{n-k}+r^n)H_{n-1} + (1-r^n)(H_{k-1}-H_{n-k}) 
    +\\&\quad
    - 2\sum_{s=1}^{n-1} \frac{r^{s}}{s} 
    + r^{n-k} \sum_{s=1}^{k-1} \left[ \frac{r^{s}}{s} + \frac{r^{s}}{n-s} \right]
    + r^{-k} \sum_{s=k}^{n-1} \left[ \frac{r^{s}}{s} + \frac{r^{s}}{n-s} \right]
    \\
    &= (1-2r^{n-k}+r^n)H_{n-1} + (1-r^n)(H_{k-1}-H_{n-k}) 
    +\\&\quad
    - 2(1-r^{n-k})\sum_{s=1}^{n-1} \frac{r^{s}}{s} 
    + r^{-k}(1-r^n) \sum_{s=k}^{n-1} \left[ \frac{r^{s}}{s} + \frac{r^{s}}{n-s} \right].
\end{align*}
\end{small}

Note that, using the assumption $r<1$, 
\begin{itemize}
    \item $(1-2r^{n-k}+r^n)H_{n-1} + (1-r^n)(H_{k-1}-H_{n-k}) 
        < 2H_{n-1} < 2\log n + 2$
    \item $-2(1-r^{n-k})\sum_{s=1}^{n-1} \frac{r^{s}}{s} 
        < -2(1-r)r $
    \item $r^{-k}(1-r^n) \sum_{s=k}^{n-1}\left[ \frac{r^{s}}{s} + \frac{r^{s}}{n-s} \right]
        < 2r^{-k}(1-r^n) \sum_{s=k}^{n-1} r^{s}
        = 2r^{-k}(1-r^n) \frac{r^k-r^n}{1-r}
        < \frac{2(1-r^n)}{1-r}$
\end{itemize}

Hence it follows that $n(S_1-S_2) = O(\log n)$.
Therefore, when $r$ and $\alpha_{10}$ are constant with respect to $n$
and for any value of $k$ it holds:
\[
    T_k(n) = \frac{n(n-1)}{\alpha_{10}(1-r)(1-r^n)} (S_1-S_2) = O(n \log n).
\]

Note that such a bound is asymptotically tight for some values of $k$.
In fact, let us consider $k=\frac{n+1}{2}$ (for odd values of $n$). 
Note that:
\begin{itemize}
    \item $(1-2r^{n-k}+r^n)H_{n-1} + (1-r^n)(H_{k-1}-H_{n-k}) 
        = (1-2r^{n-k}+r^n)H_{n-1} > (1-2r^{\frac{n-1}{2}})H_{n-1}$
    \item $-2(1-r^{n-k})\sum_{s=1}^{n-1} \frac{r^{s}}{s} 
        > -2(1-r^{n-k})\sum_{s=1}^\infty\frac{r^s}{s} = 2(1-r^n)\ln(1-r)$
    \item $r^{-k}(1-r^n) \sum_{s=k}^{n-1}\left[ \frac{r^{s}}{s} + \frac{r^{s}}{n-s} \right]
        > 2r^{-k}(1-r^n) \frac{r^k}{k}
        > \frac{2(1-r^n)}{n}$
\end{itemize}
Hence it follows that there exists $k$ (namely $k=\frac{n+1}{2}$) such that 
$n(S_1-S_2) = \Omega(\log n)$.
Therefore, when $r$ and $\alpha_{10}$ are constant with respect to $n$,
for $k=\frac{n+1}{2}$ it holds:
\[
    T_k(n) = \frac{n(n-1)}{\alpha_{10}(1-r)(1-r^n)} (S_1-S_2) = \Omega(n \log n).
\]

\end{document}